\documentclass[journal]{IEEEtran}

\usepackage{amsmath,amsfonts,amsthm,amssymb}
\usepackage{algorithmic}
\usepackage{algorithm}
\usepackage{array}
\usepackage[caption=false,font=normalsize,labelfont=sf,textfont=sf]{subfig}
\usepackage{textcomp}
\usepackage{stfloats}
\usepackage{verbatim}
\usepackage{graphicx}
\usepackage{cite}
\usepackage{hyperref}
\usepackage{mathtools}

\newtheorem{lemma}{Lemma}
\newtheorem{proposition}{Proposition}
\newtheorem{theorem}{Theorem}
\newtheorem{remark}{Remark}

\begin{document}

\title{\LARGE Towards Autonomous Driving with Short-Packet Rate Splitting:\\Age of Information Analysis and Optimization}

\author{
Zirui Zheng, Yingyang Chen,~\IEEEmembership{Senior Member,~IEEE}, Xinyue Pei,~\IEEEmembership{Member,~IEEE}, Xingwei Wang,\\Zhiquan Liu,~\IEEEmembership{Senior Member,~IEEE},~Theodoros A. Tsiftsis,~\IEEEmembership{Senior Member,~IEEE},\\Miaowen Wen,~\IEEEmembership{Senior Member,~IEEE},~and Pingzhi Fan,~\IEEEmembership{Fellow,~IEEE}
\thanks{Z. Zheng and Y. Chen are with the College of Information Science and Technology, Jinan University, Guangzhou 510632, China (e-mail: ruszzr@stu2024.jnu.edu.cn; chenyy@jnu.edu.cn).}
\thanks{X. Pei and X. Wang are with the School of Computer Science and Engineering, Northeastern University, Shenyang 110819, China (e-mail: peixy@cse.neu.edu.cn; wangxw@mail.neu.edu.cn).}
\thanks{Z. Liu is with the College of Cyber Security, Jinan University, Guangzhou 510632, China (e-mail: zqliu@vip.qq.com).}
\thanks{T. A. Tsiftsis is with the Department of Informatics and Telecommunications, University of Thessaly, Lamia 35100, Greece (e-mail: theodoros.tsiftsis@nottingham.edu.cn; tsiftsis@uth.gr).}
\thanks{M. Wen is with the School of Electronic and Information Engineering, South China University of Technology, Guangzhou 510640, China (email: eemwwen@scut.edu.cn).}
\thanks{P. Fan is with the Key Laboratory of Information Coding and Transmission, Southwest Jiaotong University, Chengdu 610031, China (e-mail: p.fan@ieee.org).}
}
\maketitle

\begin{abstract}
To address the high mobility impacts and the ultra-reliable and low-latency communication (URLLC) requirements in autonomous driving scenarios, rate-splitting multiple access (RSMA) combined with short-packet communication (SPC) emerges as a promising solution. Autonomous vehicles rely on real-time information exchange to ensure safety and coordination, making information freshness essential. By jointly capturing transmission delays and packet errors, age of information (AoI) serves as a comprehensive metric for freshness. In this paper, we investigate short-packet rate splitting to enhance information freshness measured by the AoI. By splitting the unicast messages into common and private parts, encoding all common parts together with the multicast message into a common stream, and encoding each private part into a private stream, RSMA effectively manages interference and enables achieving lower AoI. By considering critical factors such as transmit power, vehicle velocity, blocklength, and the number of transmit antennas, we derive closed-form expressions for the average AoI (AAoI) of the common stream under partial decoding and the overall AAoI under complete decoding. To enhance the AAoI performance, we propose the multi-start two-step successive convex approximation (SCA) algorithm. This algorithm first optimizes the power allocation and subsequently optimizes the rate splitting under the quality of service (QoS) trade-off constraint. Simulation results demonstrate that our short-packet rate-splitting scheme significantly improves the AAoI performance while ensuring system fairness and enabling ultra-low AAoI through the common stream, meeting the requirements of autonomous driving applications. Moreover, the trade-off between the common and overall performance is revealed, indicating that the overall performance can be further enhanced while maintaining the advantages of the common stream.
\end{abstract}

\begin{IEEEkeywords}
Age of information (AoI), rate-splitting multiple access (RSMA), short-packet communication (SPC), outdated channel state information at the transmitter (CSIT), ultra-reliable and low-latency communication (URLLC).
\end{IEEEkeywords}

\section{Introduction}
Autonomous driving is defined as a technology that enables vehicles to accurately perceive their surrounding environment and operate reliably without human intervention~\cite{AD}. It contributes to safer transportation and more sustainable travel, while significantly reducing accidents and emissions. To support such applications, the emerging sixth-generation (6G) vehicular-to-everything (V2X) communications provide ultra-reliable and low-latency communication (URLLC), which is essential for real-time sensing, control, and coordination between vehicles and infrastructure~\cite{6G-V2X}. 

As a time-critical application, autonomous driving requires real-time information exchange to ensure coordinated control and safe maneuvering. Stale status updates may lead to incorrect control decisions and compromise driving safety, thereby imposing stringent requirements on information freshness. For example, service scenarios such as vehicle platooning and cooperative perception demand end-to-end latency below tens of milliseconds, and advanced services like remote driving even require latency less than $5$~ms~\cite{3gpp.22.186}. However, transmission latency merely measures the delivery time of successfully received packets, overlooking the cumulative deterioration of freshness induced by reception errors. Consequently, conventional metrics, e.g., transmission latency, are inadequate for capturing the freshness performance of continuously evolving systems~\cite{needAoI}. To address this, the age of information (AoI) is introduced as a metric for measuring information freshness~\cite{AoIintroduce}. It is defined as the elapsed time since the generation of the latest successfully received packet, with smaller values indicating fresher status updates~\cite{AoIdefine}. By jointly accounting for transmission delays and packet errors, AoI provides a comprehensive measure of the freshness in wireless networks~\cite{AoImeasure1}.

To satisfy the stringent latency demands imposed by URLLC, short-packet communication (SPC) with finite blocklength (FBL) coding is essential~\cite{FBL}. Unlike the conventional infinite blocklength assumption, FBL coding reduces the latency while inevitably introducing transmission errors, which adversely affect the information freshness. In addition, autonomous vehicles operate in highly dynamic and complex environments, where the high mobility introduces Doppler shifts and delays, making it difficult to obtain perfect channel state information (CSI) at the transmitter (CSIT)~\cite{CSIT}. Consequently, the assumption of perfect CSIT becomes impractical, and the impact of CSIT errors cannot be neglected. Both packet errors induced by FBL coding and degraded communication performance caused by imperfect CSIT can significantly affect the AoI performance. Hence, it is necessary to study the AoI performance under FBL transmission and imperfect CSIT.

Given the challenges posed by FBL transmission and imperfect CSIT, a reliable multiple access scheme becomes necessary. Rate-splitting multiple access (RSMA) was first proposed in~\cite{firstRSMA}, demonstrating that RSMA enables Gaussian multiple access channels to achieve any capacity region point with low complexity and without user synchronization. Building on this foundation, RSMA has evolved into a promising and practical multiple access scheme tailored to modern wireless communications. By performing rate splitting at the transmitter and successive interference cancellation (SIC) at the receivers~\cite{mechanism_independent}, RSMA achieves a flexible interference management strategy that partially treats interference as noise and partially decodes it. Consequently, it avoids the extremes of space division multiple access (SDMA), which treats all interference as noise, and non-orthogonal multiple access (NOMA), which fully decodes it, serving as a soft bridge between the two~\cite{bridging_SIC}. By effectively managing multi-user interference, RSMA exhibits strong robustness and flexibility, making it particularly suitable for high-mobility SPC scenarios~\cite{mobility2}.

\subsection{Related Work}
Recently, RSMA has been extensively investigated, proving its superiority over conventional schemes. Specifically, \cite{dirtypaper} shows that RSMA achieves a broader rate region than dirty paper coding in multi-antenna broadcast channels under partial CSIT. Regarding the maximization of the ergodic sum-rate, \cite{mechanism_independent} derived a closed-form power allocation under imperfect CSIT, and \cite{FBL_perfectCSIT} designed the optimal linear precoders under FBL transmission. Extending these works, \cite{FBL_imperfectCSIT} derived a closed-form expression for the sum-rate under imperfect CSIT and FBL. Building on this, the work in \cite{wangyi} jointly optimized power allocation and common rate splitting to further enhance the ergodic performance. Moreover, \cite{throughput_SICerror,throughput_SICflexible} demonstrate that RSMA achieves higher effective throughput under FBL. In terms of reliability, \cite{error_URLLC} verifies that RSMA effectively reduces the error probability under FBL. Finally, \cite{fair_uplink,fair_downlink} show that RSMA ensures excellent user fairness in uplink and downlink FBL transmission. These findings collectively demonstrate the effectiveness of RSMA in addressing user interference and reception errors, highlighting it as a promising multiple access paradigm for next-generation wireless networks.

Despite growing interest in RSMA, research on its information freshness remains limited, and only a few works have begun to explore AoI performance across different scenarios. The studies in~\cite{eMBB_AoI,IoT_AoI,semantic_AoI} are based on the idealized assumptions of perfect CSIT and infinite blocklength. Specifically, \cite{eMBB_AoI} investigated the coexistence of mission-critical and enhanced mobile broadband services, evaluating the average AoI (AAoI) and peak AoI violation probability for puncturing, NOMA, and RSMA. The results show that RSMA can achieve comparable AoI performance while significantly improving data rates. In Internet of Things (IoT) networks, \cite{IoT_AoI} demonstrates that RSMA effectively reduces AAoI compared with orthogonal multiple access and NOMA. The work in~\cite{semantic_AoI} introduced RSMA into semantic communications and shows that it achieves a lower age of incorrect information (AoII) than SDMA. To better align with practical communication scenarios, \cite{HARQ_AoI,sb_downlink_AAoI,sb_uplink_AoII} extended the research to FBL transmission. In~\cite{HARQ_AoI}, a joint framework of hybrid automatic repeat request and RSMA was proposed to minimize AAoI by adaptive power allocation and retransmissions. In satellite-based IoT systems, \cite{sb_downlink_AAoI} and~\cite{sb_uplink_AoII} evaluated the AAoI and AoII, respectively, demonstrating the superior AoI performance of RSMA in SPC. However, these works assume perfect CSIT and do not account for the inevitable CSIT errors in high-mobility scenarios, which significantly degrade AoI performance. Moreover, the unique advantages of the rate-splitting mechanism in optimizing AoI have yet to be fully leveraged. Therefore, investigating the AoI performance of RSMA by addressing these limitations remains a significant research gap.

\subsection{Motivation and Contributions}
Existing studies have demonstrated the superior performance of RSMA across various communication scenarios. However, its potential for information freshness has not been fully explored, particularly under imperfect CSIT and FBL transmission. RSMA can deliver multicast messages through the common stream to avoid redundant transmissions and flexibly split unicast messages into common and private parts for efficient interference management. Therefore, an appropriate rate-splitting strategy can potentially mitigate the AoI degradation caused by imperfect CSIT and large-scale fading. Furthermore, since the common stream is decoded at the first layer of SIC, it can still be successfully received in partial decoding scenarios, where the common stream is decoded successfully while the private stream fails to decode. Consequently, the common stream achieves a lower AoI than the private streams, allowing unicast messages to be assigned to it for higher freshness. Motivated by these observations, we investigate the AoI performance of short-packet rate splitting in autonomous driving scenarios. The main contributions of this paper are summarized as follows:
\begin{itemize}
\item We establish a framework to analyze and optimize the AoI performance of short-packet rate splitting in high-mobility autonomous driving scenarios. Specifically, we analyze the common performance under partial decoding and the overall performance under complete decoding, and derive the closed-form expressions for the AAoI. The continuous nature of these expressions provides an analytical foundation for subsequent AAoI optimization.
\item We propose an algorithm to minimize the average AAoI of vehicles. Specifically, we initialize multiple sets of rate-splitting factors as starting points. For each starting point, we first optimize the power allocation, then enable the quality of service (QoS) trade-off constraint to optimize the rate splitting, with both subproblems solved via successive convex approximation (SCA). Finally, the optimal solution is selected from all starting points. The algorithm effectively addresses the strong coupling between power allocation and rate-splitting factors, enabling an efficient solution to the optimization problem.
\item We verify the validity of the closed-form expressions and the convergence of the optimization algorithm, and demonstrate that the proposed RSMA scheme outperforms SDMA and NOMA. Specifically, the proposed scheme effectively reduces the average AAoI of vehicles while ensuring user fairness and delivers significantly lower AAoI through the common stream. In addition, we investigate the trade-off between the common and overall performance induced by the QoS constraint, indicating that appropriate parameters can enhance overall performance while maintaining the common stream advantage.
\end{itemize}

The rest of the paper is organized as follows. Section~\ref{section II} presents the system model. Section~\ref{section III} derives the closed-form AAoI expressions for both the common stream and the overall system. Section~\ref{section IV} formulates the problem to minimize the average AAoI of vehicles and uses the proposed algorithm to optimize the power allocation and rate splitting. Section~\ref{section V} provides the simulation results and performance analysis, and Section~\ref{section VI} concludes the paper.

\textit{Notation}: Vectors are represented by bold lowercase letters. $\vert \cdot \vert$ and $\| \cdot \|$ indicate absolute value of a scalar and $\ell_2$-norm of a vector, respectively. $(\cdot)^T$ and $(\cdot)^H$ indicate transpose and Hermitian transpose of vectors, respectively. $\mathbf{I}$ denotes the identity matrix. $\mathbb{C}$ refers to the set of complex numbers. $\mathcal{CN}(0,\sigma^2)$ denotes the circularly symmetric complex Gaussian distribution with zero mean and variance $\sigma^2$. $\mathrm{Gamma}(D,\theta)$ denotes distribution with probability density function $f(x) = \frac{x^{D-1}e^{-x/\theta}}{\Gamma(D)\theta^D}$, with shape parameter $D$ and scale parameter $\theta$, and $\Gamma(D)=\int_0^\infty t^{D-1}e^{-t} \, \mathrm{d}t$ is the gamma function. $\ln(\cdot)$ denotes the natural logarithm. $\mathbb{P}(\cdot)$ and $\mathbb{E}[\cdot]$ represent probability and expectation of random variables, respectively.

\section{System Model}
\label{section II}
\begin{figure}[t] 
\centering 
\includegraphics[width=0.95\linewidth]{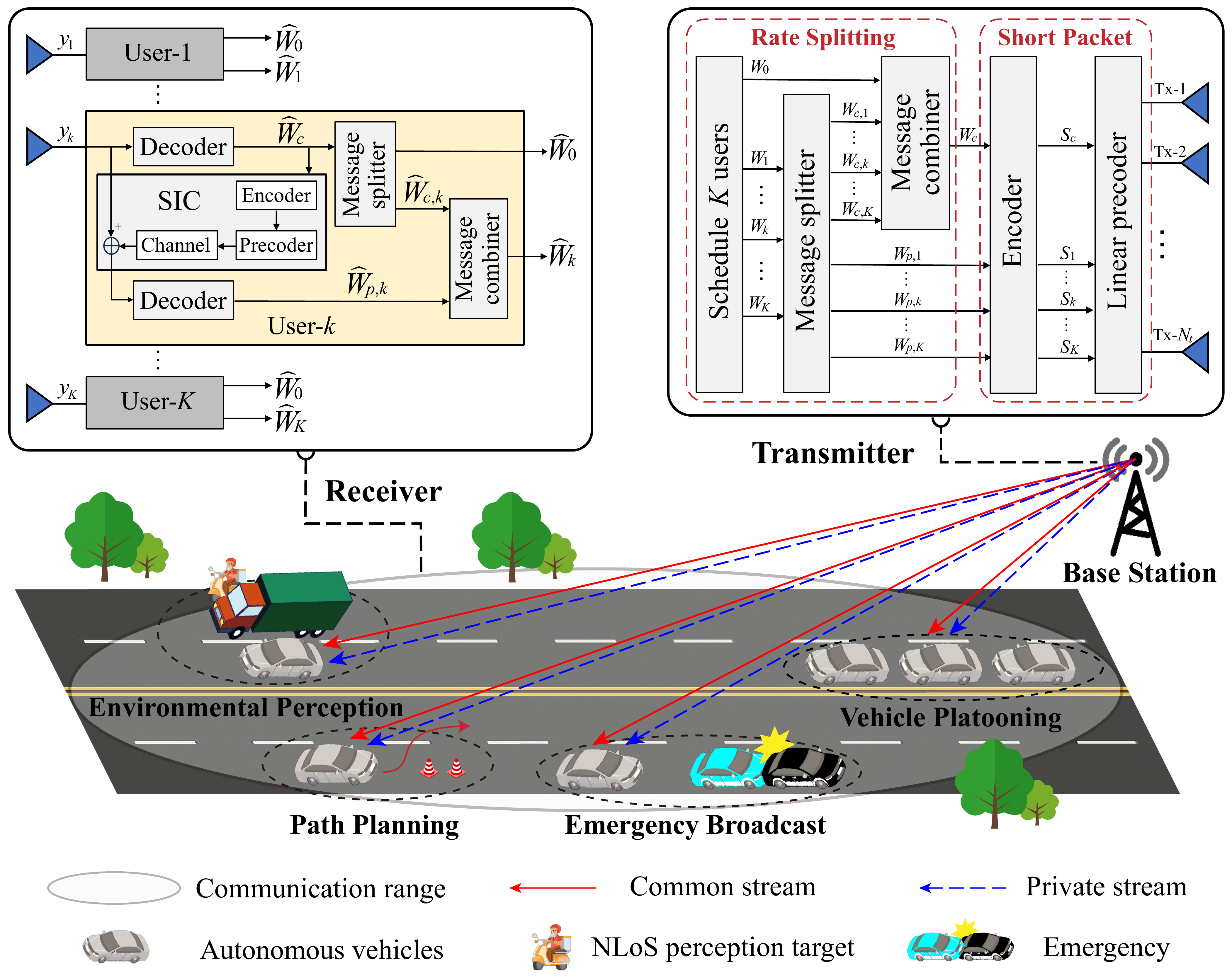}
\vspace{-5pt}
\caption{Short-packet rate splitting in a high-mobility autonomous driving scenario.}
\vspace{-10pt}
\label{fig1} 
\end{figure}
Fig.~\ref{fig1} depicts a typical high-mobility autonomous driving communication scenario with short-packet rate splitting. The BS is equipped with $N_t$ antennas in a downlink multiple-input single-output (MISO) broadcast setup and serves $K$ single-antenna autonomous vehicles, where $N_t \geq K$. The BS transmits multicast messages, such as vehicle platooning~\cite{multicast-platoon}, environmental perception~\cite{multicast-perception}, and emergency broadcasts~\cite{multicast-emergency}, as well as unicast messages, including path planning and remote driving~\cite{unicast}. The vehicles are indexed by $\mathcal{K} = \{1,2,\ldots,K\}$. A single-layer RSMA strategy is adopted to reduce receiver complexity, where each vehicle performs single-layer SIC decoding for the received signal~\cite{bridging_SIC}. At the BS transmitter, the multicast message for all vehicles is denoted as $W_0$, and the unicast message for vehicle-$k$ is denoted as $W_k$, where $\forall k \in \mathcal{K}$. Each unicast message $W_k$ is split into the common part $W_{c,k}$ and the private part $W_{p,k}$. The multicast message $W_0$ and all common parts $\{W_{c,1},W_{c,2},\ldots,W_{c,K}\}$ are combined into the common message $W_c$, which is then encoded into the common stream $s_c$~\cite{supercommon}. Each private part $W_{p,k}$ is independently encoded into the private stream $s_k$. Both the common and private streams are encoded under the FBL regime, which introduces non-negligible packet errors. At the receiver, each vehicle first decodes the common stream, then decodes its own private stream via SIC. Linear precoding is applied to all streams. The composite stream vector is denoted as $\mathbf{s}=[s_c,s_1,s_2,\ldots,s_K]^T$ and satisfies $\mathbb{E}[\mathbf{ss}^H] = \mathbf{I}$. The transmitted signal at the BS is given by
\begin{equation}
\mathbf{x} = \sqrt{P_T\alpha_c}\mathbf{p}_cs_c + \sqrt{P_T}\sum_{k\in\mathcal{K}}\sqrt{\alpha_k}\mathbf{p}_ks_k,
\end{equation}
where $\mathbf{p}_c,\mathbf{p}_k \in \mathbb{C}^{N_t\times1}$ denote the precoders for the common and the $k$-th private streams, respectively, satisfying $\| \mathbf{p}_c \|^2 = \| \mathbf{p}_k \|^2 = 1, \forall k \in \mathcal{K}$. The total transmit power is represented by $P_T$. The power allocation factors for the common and the $k$-th private streams are represented by $\alpha_c,\alpha_k \in [0,1]$, respectively, with $\alpha_c + \sum_{k \in \mathcal{K}}\alpha_k = 1$. For tractable analysis, zero-forcing (ZF) precoding is adopted for the private streams, and random precoding is employed for the common stream~\cite{mechanism_independent,wangyi}.

The received signal at vehicle-$k$ is given by
\begin{equation}
y_k=\sqrt{\xi_k}\mathbf{h}_k^H[m]\mathbf{x}+n_k,
\end{equation}
where $\mathbf{h}_k[m] \in \mathbb{C}^{N_t\times1}$ denotes the small-scale fading between the BS and vehicle-$k$, $n_k \sim \mathcal{CN}(0,\sigma^2)$ represents the additive Gaussian white noise (AWGN), and $\xi_k$ indicates the power attenuation due to the path loss.

We consider that both vehicle mobility and CSI feedback delay lead to outdated CSIT, and employ the first-order Gaussian-Markov process to model the channel evolution~\cite{channel}:
\begin{equation}
\mathbf{h}_k[m] = \sqrt{\rho^2}\mathbf{h}_k[m-1] + \sqrt{1-\rho^2}\mathbf{e}_k[m],
\end{equation}
where $\mathbf{h}_k[\cdot]$ is a spatially uncorrelated Rayleigh flat fading channel with independent and identically distributed (i.i.d.) entries following $\mathcal{CN}(0,1)$. Similarly, the error term $\mathbf{e}_k[m]$ consists of i.i.d. entries following $\mathcal{CN}(0,1)$. The time correlation coefficient $\rho=J_0(2\pi f_DT)$ characterizes the temporal dependence between $\mathbf{h}_k[m]$ and $\mathbf{h}_k[m-1]$, obeying the Jake's model~\cite{jakes}, where $J_0(\cdot)$ is the zeroth-order Bessel function. Here, $T$ denotes the channel instantiation interval, and $f_D = vf_c/c$ represents the maximum Doppler frequency, where $f_c$ is the carrier frequency, $v$ is the vehicle velocity, and $c = 3 \times 10^8$ m/s is the speed of light. The model assumes that the channel remains constant within each coherence interval, and the transmitter can only obtain the outdated CSIT. Specifically, the actual channel vector at time slot $m$ is $\mathbf{h}_k[m]$, while the transmitter only obtains $\mathbf{h}_k[m-1]$ from the previous slot for precoding. Additionally, the path loss (in~dB) is given by $PL_k = -32.4 - 20\lg(f_c) - 31.9\lg(d_k)$~\cite{3gpp.38.901}, and the linear-scale power attenuation factor is expressed as $\xi_k=10^{PL_k/10}$.

\section{Closed-Form AAoI Analysis}
\label{section III}
This section is devoted to deriving the closed-form AAoI expressions for both the common stream and the overall system. Firstly, the approximate cumulative distribution functions (CDFs) of the signal-to-interference-plus-noise ratios (SINRs) are obtained by leveraging the distributions of commonly used random variables. Subsequently, the approximate block error rates (BLERs) are derived by further applying the linear approximation of the Q-function. Finally, based on the AoI analysis, the AAoI expressions are obtained for use in the optimization discussed in the next section.

\subsection{CDFs of SINRs}
Upon receiving the signal, each vehicle first decodes the common stream by treating the private streams as noise. Thus, the common stream SINR at vehicle-$k$ is given by
\begin{equation}
\Gamma_{c,k} = \frac{P_{n}\xi_k\alpha_c \vert \mathbf{h}_k^H[m]\mathbf{p}_c \vert^2}{P_{n}\xi_k \sum_{j\in\mathcal{K}}\alpha_j \vert \mathbf{h}_k^H[m]\mathbf{p}_j \vert^2 + 1},
\end{equation}
where $P_{n} = P_T/\sigma^2$ denotes the normalized transmit power. After successfully decoding the common stream, we assume perfect SIC, meaning the common stream can be completely removed from the received signal. Subsequently, vehicle-$k$ decodes its private stream by treating other private streams as noise, with the SINR given by
\begin{equation}
\Gamma_{p,k} = \frac{P_{n}\xi_k\alpha_k \vert \mathbf{h}_k^H[m]\mathbf{p}_k \vert^2}{P_{n}\xi_k \sum_{j\in\mathcal{K}\setminus\{k\}}\alpha_j \vert \mathbf{h}_k^H[m]\mathbf{p}_j \vert^2+1}.
\end{equation}

The private streams perform ZF precoding based on the outdated CSIT from the previous time slot. Hence, for the $k$-th private stream, we have $\vert\mathbf{h}_k^H[m-1]\mathbf{p}_j\vert = 0$ for all $j \in \mathcal{K} \setminus \{k\}$, and $\vert\mathbf{h}_k^H[m-1]\mathbf{p}_k\vert^2 \sim \mathrm{Gamma}(N_t-K+1,1)$~\cite{Gamma}. The ZF precoder $\mathbf{p}_k$ is isotropically distributed and independent of the Gaussian error $\mathbf{e}_j[m]$ for all $j \in \mathcal{K}$, resulting in $\vert \mathbf{e}_j^H[m]\mathbf{p}_k\vert^2 \sim \mathrm{Gamma}(1,1)$. Furthermore, the common precoder $\mathbf{p}_c$ is assumed to be a random beamformer independent of $\mathbf{h}_k[m-1]$, $\mathbf{e}_k[m]$, and $\mathbf{p}_k$, leading to $\vert\mathbf{h}_k^H[m]\mathbf{p}_c\vert^2 \sim \mathrm{Gamma}(1,1)$~\cite{hkpc}. For tractability in deriving the statistics of the received SINRs, we neglect terms containing both $\mathbf{h}_k^H[m-1]$ and $\mathbf{e}_k^H[m]$, thereby approximating $\vert \mathbf{h}_k^H[m]\mathbf{p}_j\vert^2$ for all $j \in \mathcal{K}$ as
\begin{align}
\vert \mathbf{h}_k^H[m]\mathbf{p}_k\vert^2 &\approx \rho^2\vert \mathbf{h}_k^H[m-1]\mathbf{p}_k\vert^2 + (1\!-\!\rho^2)\vert \mathbf{e}_k^H[m]\mathbf{p}_k\vert^2, \label{6} \\
\vert \mathbf{h}_k^H[m]\mathbf{p}_j\vert^2 &\approx \rho^2\vert \mathbf{h}_k^H[m-1]\mathbf{p}_j\vert^2 + (1\!-\!\rho^2)\vert \mathbf{e}_k^H[m]\mathbf{p}_j\vert^2 \nonumber \\ 
&= (1-\rho^2)\vert \mathbf{e}_k^H[m]\mathbf{p}_j\vert^2,\; j \neq k. \label{7}
\end{align}

The SINR expressions involve sums of random variables (RVs) that follow different Gamma distributions. By matching the first two moments, the sum of Gamma-distributed RVs can be approximated by a new Gamma-distributed RV~\cite{Gamma}. Specifically, an RV $S = \sum_iA_i$, where $A_i \sim \mathrm{Gamma}(D_i,\theta_i)$, can be approximated by an RV $\widetilde{S} \sim \mathrm{Gamma}(\widetilde{D},\widetilde{\theta})$, where
\begin{equation}
\label{8}
\widetilde{D}=\frac{(\sum_iD_i\theta_i)^2}{\sum_i D_i\theta_i^2}, \; \widetilde{\theta}=\frac{\sum_iD_i\theta_i^2}{\sum_iD_i\theta_i}.
\end{equation}
\begin{lemma}
\label{lemma1}
The RV $X = \vert \mathbf{h}_k^H[m]\mathbf{p}_k\vert^2$ can be approximated by an RV $\widetilde{X}$ with the distribution $\mathrm{Gamma}(\widetilde{D}_1,\widetilde{\theta}_1)$, where
\begin{align}
\label{9}
\widetilde{D}_1 &= \frac{\left[(N_t-K+1)\rho^2+1-\rho^2\right]^2}{(N_t-K+1)\rho^4+(1-\rho^2)^2},\nonumber\\
\widetilde{\theta}_1 &= \frac{(N_t-K+1)\rho^4+(1-\rho^2)^2}{(N_t-K+1)\rho^2+1-\rho^2}.
\end{align}
\begin{proof}
See Appendix~\ref{appendixA}.
\end{proof}
\end{lemma}
\begin{lemma}
\label{lemma2}
The RV $Y_k = \sum_{j\in\mathcal{K}}\alpha_j\vert \mathbf{h}_k^H[m]\mathbf{p}_j\vert^2$ can be approximated by an RV $\widetilde{Y}_k \sim \mathrm{Gamma}(\widetilde{D}_{2,k},\widetilde{\theta}_{2,k})$, where
\begin{align}
\label{10}
\widetilde{D}_{2,k} &= \frac{\left[(N_t-K+1)\rho^2\alpha_k+(1-\rho^2)(1-\alpha_c)\right]^2}{(N_t-K+1)\rho^4\alpha_k^2+(1-\rho^2)^2\sum_{j\in\mathcal{K}}\alpha_j^2},\nonumber\\
\widetilde{\theta}_{2,k} &= \frac{(N_t-K+1)\rho^4\alpha_k^2+(1-\rho^2)^2\sum_{j\in\mathcal{K}}\alpha_j^2}{(N_t-K+1)\rho^2\alpha_k+(1-\rho^2)(1-\alpha_c)}.
\end{align}
\end{lemma}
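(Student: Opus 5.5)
Using \eqref{6} and \eqref{7}, I will first write $Y_k$ as a sum of independent Gamma variables and then apply the moment-matching rule \eqref{8}. Substituting the approximations gives
\begin{align*}
Y_k &\approx \alpha_k\rho^2\vert \mathbf{h}_k^H[m-1]\mathbf{p}_k\vert^2 + \alpha_k(1-\rho^2)\vert \mathbf{e}_k^H[m]\mathbf{p}_k\vert^2\\
&\quad+\sum_{j\in\mathcal{K}\setminus\{k\}}\alpha_j(1-\rho^2)\vert \mathbf{e}_k^H[m]\mathbf{p}_j\vert^2 .
\end{align*}
The last two groups merge into $\sum_{j\in\mathcal{K}}\alpha_j(1-\rho^2)\vert \mathbf{e}_k^H[m]\mathbf{p}_j\vert^2$. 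The model statistics say that $\vert\mathbf{h}_k^H[m-1]\mathbf{p}_k\vert^2\sim\mathrm{Gamma}(N_t-K+1,1)$ and $\vert\mathbf{e}_k^H[m]\mathbf{p}_j\vert^2\sim\mathrm{Gamma}(1,1)$. Since scaling a Gamma variable by $c>0$ multiplies its scale parameter by $c$, the terms become $A_0\sim\mathrm{Gamma}(N_t-K+1,\rho^2\alpha_k)$ and $A_j\sim\mathrm{Gamma}(1,(1-\rho^2)\alpha_j)$ for $j\in\mathcal{K}$.

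Next I compute the sums required by \eqref{8}. The first-moment sum is
\begin{equation*}
\sum_i D_i\theta_i=(N_t-K+1)\rho^2\alpha_k+(1-\rho^2)\sum_{j\in\mathcal{K}}\alpha_j .
\end{equation*}
The power constraint $\alpha_c+\sum_{j}\alpha_j=1$ turns this into $(N_t-K+1)\rho^2\alpha_k+(1-\rho^2)(1-\alpha_c)$. The second-moment sum is
\begin{equation*}
\sum_i D_i\theta_i^2=(N_t-K+1)\rho^4\alpha_k^2+(1-\rho^2)^2\sum_{j\in\mathcal{K}}\alpha_j^2 .
\end{equation*}
Substituting both into \eqref{8} gives exactly $\widetilde{D}_{2,k}$ and $\widetilde{\theta}_{2,k}$ in \eqref{10}. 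This is the same computation as in Lemma~\ref{lemma1}, which is recovered by setting $\alpha_k=1$ and all other $\alpha_j=0$.

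The main obstacle is justifying that the summands are independent, or at least treating them as such, since \eqref{8} matches moments of an independent sum. The variables $\vert\mathbf{e}_k^H[m]\mathbf{p}_j\vert^2$ for different $j$ all share the same error vector $\mathbf{e}_k[m]$. I will argue that they are nevertheless independent. Conditioned on the precoders, $\mathbf{e}_k[m]\sim\mathcal{CN}(\mathbf{0},\mathbf{I})$ is independent of them, and the ZF precoders are mutually orthogonal in the typical regime. Projections of an isotropic Gaussian onto orthonormal directions are i.i.d. $\mathcal{CN}(0,1)$, so these terms are independent exponentials. If the precoders are not exactly orthogonal, I will state this step as part of the approximation, consistent with the paper's approach. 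Finally, $\mathbf{h}_k[m-1]$ is independent of $\mathbf{e}_k[m]$, which makes $A_0$ independent of the other terms. With independence in place, the remaining steps are direct substitution.
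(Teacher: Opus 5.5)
Your proposal follows the paper's route, which omits this proof as analogous to Lemma~\ref{lemma1}: expand $Y_k$ via \eqref{6}--\eqref{7} into a $\mathrm{Gamma}(N_t-K+1,\rho^2\alpha_k)$ term plus $\mathrm{Gamma}(1,(1-\rho^2)\alpha_j)$ terms, use $\sum_{j}\alpha_j=1-\alpha_c$, and substitute into \eqref{8}; your moment sums are correct. One correction: ZF precoders are not mutually orthogonal. They satisfy $\mathbf{h}_i^H[m-1]\mathbf{p}_j=0$ for $i\neq j$, but $\mathbf{p}_i^H\mathbf{p}_j\neq 0$ in general, so conditionally $\mathrm{Cov}(\vert\mathbf{e}_k^H[m]\mathbf{p}_i\vert^2,\vert\mathbf{e}_k^H[m]\mathbf{p}_j\vert^2)=\vert\mathbf{p}_i^H\mathbf{p}_j\vert^2$. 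Drop that justification and treat independence of these terms as part of the approximation, as the paper implicitly does.
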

\begin{lemma}
\label{lemma3}
The RV $Z_k = \sum_{j\in\mathcal{K}\setminus\{k\}}\alpha_j\vert \mathbf{h}_k^H[m]\mathbf{p}_j\vert^2$ can be approximated by an RV $\widetilde{Z}_k \sim \mathrm{Gamma}(\widetilde{D}_{3,k},\widetilde{\theta}_{3,k})$, where
\begin{equation}
\widetilde{D}_{3,k} \!=\! \frac{(1\!-\!\alpha_c\!-\!\alpha_k)^2}{\sum_{j\in\mathcal{K}\setminus\{k\}}\!\alpha_j^2}, \;
\widetilde{\theta}_{3,k} \!=\! \frac{(1\!-\!\rho^2)\sum_{j\in\mathcal{K}\setminus\{k\}}\!\alpha_j^2}{1\!-\!\alpha_c\!-\!\alpha_k}.
\end{equation}
\begin{proof}
The proofs of Lemmas~\ref{lemma2} and~\ref{lemma3} are similar to that of Lemma~\ref{lemma1} and are omitted here for brevity.
\end{proof}
\end{lemma}
Based on the approximate distributions of the above RVs, the approximate CDF of $\Gamma_{c,k}$ is derived in Lemma~\ref{lemma4}.
\begin{lemma}
\label{lemma4}
$\Gamma_{c,k}$ can be approximated by $\widetilde{\Gamma}_{c,k}$ with the CDF
\begin{equation}
\label{12}
F_{\widetilde{\Gamma}_{c,k}}(x) = 1 - \frac{e^{-\tfrac{x}{P_n\xi_k\alpha_c}}}{\Bigl( \frac{\widetilde{\theta}_{2,k}}{\alpha_c}x+1 \Bigr)^{\widetilde{D}_{2,k}}}, \; x\geq0.
\end{equation}
\begin{proof}
See Appendix~\ref{appendixB}.
\end{proof}
\end{lemma}
Lemma~\ref{lemma5} presents the approximate CDF of $\Gamma_{p,k}$. Since the derivation is similar to Lemma~\ref{lemma4}, the proof is omitted here.
\begin{lemma}
\label{lemma5}
$\Gamma_{p,k}$ can be approximated by $\widetilde{\Gamma}_{p,k}$ with the CDF
\begin{equation}
F_{\widetilde{\Gamma}_{p,k}}(x) = 1-\frac{e^{-\tfrac{x}{P_n\xi_k\widetilde{D}_1\widetilde{\theta}_1\alpha_k}}}{\Bigl( \frac{\widetilde{\theta}_{3,k}}{\widetilde{D}_1\widetilde{\theta}_1\alpha_k}x+1 \Bigr)^{\widetilde{D}_{3,k}}}, \; x\geq0.
\end{equation}
\end{lemma}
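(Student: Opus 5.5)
Lemma~\ref{lemma5} is proved the same way as Lemma~\ref{lemma4}: condition on the interference term and average over the desired-signal term. The one extra step is to make the signal term exponential.

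First, write $\Gamma_{p,k} = \frac{P_n\xi_k\alpha_k X}{P_n\xi_k Z_k + 1}$ with $X=\vert\mathbf{h}_k^H[m]\mathbf{p}_k\vert^2$ and $Z_k$ as in Lemma~\ref{lemma3}. By Lemma~\ref{lemma1}, $X\approx\widetilde{X}\sim\mathrm{Gamma}(\widetilde{D}_1,\widetilde{\theta}_1)$. In Lemma~\ref{lemma4} the signal term $\vert\mathbf{h}_k^H[m]\mathbf{p}_c\vert^2$ was exactly $\mathrm{Gamma}(1,1)$, i.e.\ exponential. That is what produced the factor $e^{-x/(P_n\xi_k\alpha_c)}$ times a Laplace transform of the interference. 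The stated CDF has the same structure with $\alpha_c$ replaced by $\widetilde{D}_1\widetilde{\theta}_1\alpha_k$. I will therefore further approximate $\widetilde{X}$ by an exponential random variable with the same mean $\mathbb{E}[\widetilde X]=\widetilde{D}_1\widetilde{\theta}_1$, so that $\mathbb{P}(\widetilde X>t)\approx e^{-t/(\widetilde D_1\widetilde\theta_1)}$.

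This mean-matching step is the main obstacle, since the exact Gamma tail would give a finite sum of terms rather than the stated single closed form. I would justify it as follows. In high mobility ($\rho$ small), $\widetilde D_1\to 1$ and the approximation becomes exact. In general it is the simplest first-moment match consistent with the target expression. An alternative would be to use the upper incomplete gamma function, but that would not reproduce the stated formula, so I commit to the exponential approximation.

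Then, conditioning on $\widetilde{Z}_k=z$ and using Lemma~\ref{lemma3} to take $\widetilde{Z}_k\sim\mathrm{Gamma}(\widetilde D_{3,k},\widetilde\theta_{3,k})$ independent of $\widetilde X$:
\begin{equation*}
\mathbb{P}(\widetilde\Gamma_{p,k}>x)=\mathbb{E}_z\!\left[\exp\!\Bigl(-\tfrac{x(P_n\xi_k z+1)}{P_n\xi_k\widetilde D_1\widetilde\theta_1\alpha_k}\Bigr)\right]
= e^{-\tfrac{x}{P_n\xi_k\widetilde D_1\widetilde\theta_1\alpha_k}}\,\mathbb{E}\bigl[e^{-s\widetilde Z_k}\bigr],
\end{equation*}
with $s=x/(\widetilde D_1\widetilde\theta_1\alpha_k)$. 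The Gamma moment-generating function gives $\mathbb{E}[e^{-s\widetilde Z_k}]=(1+\widetilde\theta_{3,k}s)^{-\widetilde D_{3,k}}$. Taking the complement yields the stated $F_{\widetilde\Gamma_{p,k}}(x)$ for $x\ge 0$.

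The independence of $X$ and $Z_k$ holds because, under approximation \eqref{7}, $Z_k$ depends only on the projections $\mathbf{e}_k^H[m]\mathbf{p}_j$ for $j\neq k$. These are independent of the projection onto $\mathbf{p}_k$, since the ZF precoders are orthogonal and $\mathbf{e}_k[m]$ is isotropic. They are also independent of $\mathbf{h}_k[m-1]$.
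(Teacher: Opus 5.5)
Your route is the paper's. The paper omits this proof as ``similar to Lemma~\ref{lemma4}'', i.e.\ the argument of Appendix~\ref{appendixB}: condition on the Gamma-approximated interference and integrate it against an exponential signal tail. Your Laplace-transform step is exactly the integral evaluated in \eqref{46}--\eqref{47}.

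You correctly identify that the stated formula needs an extra step the paper never states: replacing $\widetilde{X}\sim\mathrm{Gamma}(\widetilde{D}_1,\widetilde{\theta}_1)$ by an exponential variable with mean $\widetilde{D}_1\widetilde{\theta}_1$. Making this explicit is worthwhile, but do not oversell it. For non-integer $\widetilde{D}_1$ the exact tail is a regularized upper incomplete gamma function, not a finite sum. Also, $\widetilde{D}_1$ is not near $1$ in general: it equals $N_t-K+2$ at $\rho^2=1/2$ and is about $1.9$ in the default simulation setting. So this is a genuine additional approximation, exact only as $\rho\to0$.

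The one step that is actually wrong is your independence argument. Zero-forcing precoders are not mutually orthogonal. ZF only enforces $\mathbf{h}_i^H[m-1]\mathbf{p}_j=0$ for $i\neq j$, and the Gram matrix of $\{\mathbf{p}_j\}$ is generally not diagonal. Already for $N_t=K=2$, $\mathbf{p}_1^H\mathbf{p}_2=0$ only if $\mathbf{h}_1[m-1]\perp\mathbf{h}_2[m-1]$.

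Conditioned on the precoders, $\mathbf{e}_k^H[m]\mathbf{p}_k$ and $\mathbf{e}_k^H[m]\mathbf{p}_j$ are jointly circularly-symmetric Gaussian with cross-covariance $\mathbf{p}_j^H\mathbf{p}_k$. Under \eqref{6}--\eqref{7} one obtains $\mathrm{Cov}(X,Z_k)=(1-\rho^2)^2\sum_{j\in\mathcal{K}\setminus\{k\}}\alpha_j\mathbb{E}\bigl[\vert\mathbf{p}_j^H\mathbf{p}_k\vert^2\bigr]$. This is strictly positive for $\rho<1$ whenever some $\alpha_j>0$ with $j\neq k$. So $X$ and $Z_k$ are not independent, and independence cannot be derived.

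It must instead be imposed as a modelling approximation, which is exactly what the paper does in Appendix~\ref{appendixB} (``under the assumption of independence between the numerator and denominator''). With that claim restated as an assumption, your argument reproduces the stated CDF at the paper's level of rigor.
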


\subsection{BLER Analysis}
In FBL transmission, the BLER can be approximated as~\cite{polyanskiy}
\begin{equation}
\label{14}
\varepsilon \approx Q\left( \frac{\sqrt{n}\left[\log_2(1+\Gamma)-\frac{m}{n}\right]}{\log_2e\sqrt{1-(1+\Gamma)^{-2}}} \right),
\end{equation}
where $Q(x) = \frac{1}{\sqrt{2\pi}} \int_x^{\infty}e^{-\frac{t^2}{2}} \, \mathrm{d}t$ is the Gaussian Q-function, $\Gamma$ is the SINR, $n$ is the blocklength, and $m$ is the number of information bits. Due to the complexity of (\ref{14}), it is difficult to directly obtain a closed-form expression. Therefore, we approximate it using a piecewise linear function~\cite{Qfunction}, which can be expressed as
\begin{equation}
\label{15}
L(\Gamma) = 
\begin{cases}
1 &0\leq\Gamma\leq\mu \\
0.5-\delta(\Gamma-\beta) &\mu\leq\Gamma\leq\nu \\
0 &\Gamma\geq\nu
\end{cases},
\end{equation}
where $\delta = \sqrt{n/[2\pi(2^{2m/n}-1)]}$, $\beta=2^{\frac{m}{n}}-1$, $\mu=\beta-\frac{1}{2\delta}$, and $\nu =\!\beta+\frac{1}{2\delta}$. Based on the above approximation $\varepsilon \approx L(\Gamma)$ and the approximate CDFs of the SINRs, the approximate average BLERs are derived in Propositions~\ref{proposition1} and~\ref{proposition2}.
\begin{proposition}
\label{proposition1}
The average BLER for decoding the common stream at vehicle-$k$ can be approximated by
\begin{equation}
\label{16}
\widetilde{\varepsilon}_{c,k} = 1 - \frac{e^{-\tfrac{2^{\frac{m_c}{n_c}}-1}{P_n\xi_k\alpha_c}}} {\Bigl (\frac{\widetilde{\theta}_{2,k}}{\alpha_c} (2^{\frac{m_c}{n_c}}-1)+1 \Bigl)^{\widetilde{D}_{2,k}}}.
\end{equation}
\end{proposition}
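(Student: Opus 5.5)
The approach is to average the piecewise linear approximation $L(\Gamma)$ in (\ref{15}) over the approximate distribution of $\widetilde{\Gamma}_{c,k}$ from Lemma~\ref{lemma4}. With $\mu_c=\beta_c-\frac{1}{2\delta_c}$, $\nu_c=\beta_c+\frac{1}{2\delta_c}$ and $\beta_c=2^{m_c/n_c}-1$, the average BLER is
\begin{equation*}
\widetilde{\varepsilon}_{c,k}=\int_0^\infty L(x)\,f_{\widetilde{\Gamma}_{c,k}}(x)\,\mathrm{d}x .
\end{equation*}

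Because $L$ is piecewise linear, I will integrate by parts rather than work with the density. Since $L(\infty)=0$ and $L(0)F(0)=0$, this gives
\begin{equation*}
\widetilde{\varepsilon}_{c,k}=-\int_0^\infty L'(x)F_{\widetilde{\Gamma}_{c,k}}(x)\,\mathrm{d}x=\delta_c\int_{\mu_c}^{\nu_c}F_{\widetilde{\Gamma}_{c,k}}(x)\,\mathrm{d}x,
\end{equation*}
because $L'=-\delta_c$ on $[\mu_c,\nu_c]$ and zero elsewhere. (If $\mu_c<0$, the lower limit is truncated at $0$; for typical $m_c/n_c$ and $n_c$ we have $\mu_c\ge 0$, and I will assume this.)

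The exact integral of $1-e^{-x/(P_n\xi_k\alpha_c)}(\widetilde{\theta}_{2,k}x/\alpha_c+1)^{-\widetilde{D}_{2,k}}$ over $[\mu_c,\nu_c]$ would produce incomplete gamma functions, which is not closed form. This is the main obstacle. To get around it, I will use the standard first-order approximation from the linear Q-function literature~\cite{Qfunction}. The interval has width $1/\delta_c$ and is centred at $\beta_c$, so
\begin{equation*}
\delta_c\int_{\mu_c}^{\nu_c}F(x)\,\mathrm{d}x\approx F(\beta_c).
\end{equation*}
This is exact whenever $F$ is linear on the interval, and it follows from the midpoint rule. It is accurate because $\delta_c=\sqrt{n_c/[2\pi(2^{2m_c/n_c}-1)]}$ grows with the blocklength, which makes the interval narrow compared with the scale on which $F$ varies.

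Substituting $x=\beta_c=2^{m_c/n_c}-1$ into (\ref{12}) then gives exactly (\ref{16}). I will remark that the same steps, applied with Lemma~\ref{lemma5}, will yield Proposition~\ref{proposition2}. If one wants more rigour, the error of the midpoint step can be bounded by $\frac{1}{24\delta_c^2}\sup|F''|$, which vanishes as $n_c\to\infty$.
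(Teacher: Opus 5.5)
Your proposal is correct and follows the paper's proof in Appendix~\ref{appendixC} essentially step for step. Both arguments reduce $\mathbb{E}[L(\Gamma_{c,k})]$ to $\delta\int_{\mu}^{\nu}F(x)\,\mathrm{d}x$ by integration by parts. The paper splits the integral piecewise and notes that the boundary coefficients $0.5+\delta\mu-\delta\beta$ and $0.5+\delta\beta-\delta\nu$ vanish, whereas you integrate against $L'$ in one step. Both then apply the midpoint rule to get $F(2^{m_c/n_c}-1)$ and substitute the CDF from Lemma~\ref{lemma4}. Your remarks on assuming $\mu_c\ge 0$ and the $\frac{1}{24\delta_c^2}\sup|F''|$ midpoint error bound are small refinements the paper does not state.
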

\begin{proof}
See Appendix~\ref{appendixC}.
\end{proof}
\begin{proposition}
\label{proposition2}
Under perfect SIC and successful decoding of the common stream, the average BLER for decoding the private stream at vehicle-$k$ can be approximated by
\begin{equation}
\label{17}
\widetilde{\varepsilon}_{p,k} = 1 - \frac{e^{-\tfrac{2^{\frac{m_k}{n_k}}-1}{P_n\xi_k\widetilde{D}_1\widetilde{\theta}_1\alpha_k}}}{\Bigl( \frac{\widetilde{\theta}_{3,k}}{\widetilde{D}_1\widetilde{\theta}_1\alpha_k}(2^{\frac{m_k}{n_k}}-1)+1 \Bigr)^{\widetilde{D}_{3,k}}}.
\end{equation}
\end{proposition}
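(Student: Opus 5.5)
The plan mirrors the argument used for Proposition~\ref{proposition1}, applied to the private-stream SINR $\Gamma_{p,k}$ through its approximate CDF in Lemma~\ref{lemma5}. Write the average BLER as $\widetilde{\varepsilon}_{p,k}=\mathbb{E}[L(\widetilde{\Gamma}_{p,k})]$, with $L(\cdot)$ the piecewise linear approximation (\ref{15}) evaluated at blocklength $n_k$ and $m_k$ information bits. So $\beta_k=2^{m_k/n_k}-1$, and $\delta_k,\mu_k,\nu_k$ are defined accordingly. Conditioning on successful common-stream decoding and perfect SIC justifies using $\Gamma_{p,k}$ with the common stream removed. Lemma~\ref{lemma5} then supplies its distribution, built from Lemmas~\ref{lemma1} and~\ref{lemma3}.

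First, I would write the expectation as an integral against the density and integrate by parts. Since $L$ equals $1$ on $[0,\mu_k]$, is linear with slope $-\delta_k$ on $[\mu_k,\nu_k]$, and vanishes beyond $\nu_k$, this gives
\begin{equation*}
\widetilde{\varepsilon}_{p,k}=\int_0^\infty L(x)\,\mathrm{d}F_{\widetilde{\Gamma}_{p,k}}(x)=\delta_k\int_{\mu_k}^{\nu_k}F_{\widetilde{\Gamma}_{p,k}}(x)\,\mathrm{d}x ,
\end{equation*}
using $L(0)=1$, $F(0)=0$ and $L(\infty)=0$. If $\mu_k<0$, I would truncate the lower limit at $0$; this is a negligible edge case when $\delta_k$ is large.

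Second, the remaining integral has no elementary form, because of the factor $e^{-ax}(bx+1)^{-D}$. I would follow the standard step from~\cite{Qfunction}: the window $[\mu_k,\nu_k]$ has width $1/\delta_k$ and is centred at $\beta_k$. Approximating the integrand by its value at the midpoint (a first-order Riemann/Taylor approximation) gives $\delta_k\cdot\frac{1}{\delta_k}F_{\widetilde{\Gamma}_{p,k}}(\beta_k)=F_{\widetilde{\Gamma}_{p,k}}(\beta_k)$.

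Third, substituting $x=\beta_k=2^{m_k/n_k}-1$ into the CDF of Lemma~\ref{lemma5} yields exactly (\ref{17}), with exponent denominator $P_n\xi_k\widetilde{D}_1\widetilde{\theta}_1\alpha_k$ and base $\frac{\widetilde{\theta}_{3,k}}{\widetilde{D}_1\widetilde{\theta}_1\alpha_k}\beta_k+1$ raised to $\widetilde{D}_{3,k}$.

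The main obstacle is justifying the midpoint approximation in the second step. I would argue it by a Taylor expansion of $F$ around $\beta_k$. The linear term integrates to zero over the symmetric window, so the error is $O(F''/\delta_k^2)$, which is small because $\delta_k$ grows like $\sqrt{n_k}$. Equivalently, the curvature of $F$ is small over an interval of width $1/\delta_k$.
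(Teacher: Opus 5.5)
Your proposal is correct and follows essentially the same route as the paper, which omits the proof but mirrors Appendix~\ref{appendixC}. That route integrates by parts to reduce $\mathbb{E}[L(\Gamma_{p,k})]$ to $\delta_k\int_{\mu_k}^{\nu_k}F_{\widetilde{\Gamma}_{p,k}}(x)\,\mathrm{d}x$, applies the midpoint rule to obtain $F_{\widetilde{\Gamma}_{p,k}}(2^{m_k/n_k}-1)$, and substitutes the CDF of Lemma~\ref{lemma5}; your Taylor-based error estimate for the midpoint step and your remark on $\mu_k<0$ are extra justification the paper does not give, but they leave the argument unchanged.
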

\begin{proof}
The proof is similar to that of Proposition~\ref{proposition1} and is omitted here.
\end{proof}
Each vehicle first decodes the common stream and then decodes its private stream via SIC. If the common stream decoding fails, SIC cannot remove its interference. Since the common stream is generally transmitted with higher power, it strongly interferes with the private streams, leading to decoding failure. Therefore, we assume that the common stream decoding failure inevitably results in the corresponding private stream decoding failure. Accordingly, the overall average BLER at vehicle-$k$ can be approximated by
\begin{equation}
\label{18}
\widetilde{\varepsilon}_k = \widetilde{\varepsilon}_{c,k} + (1-\widetilde{\varepsilon}_{c,k})\widetilde{\varepsilon}_{p,k}.
\end{equation}

\subsection{AAoI Analysis}
\begin{figure}[t] 
\centering 
\includegraphics[width=0.95\linewidth]{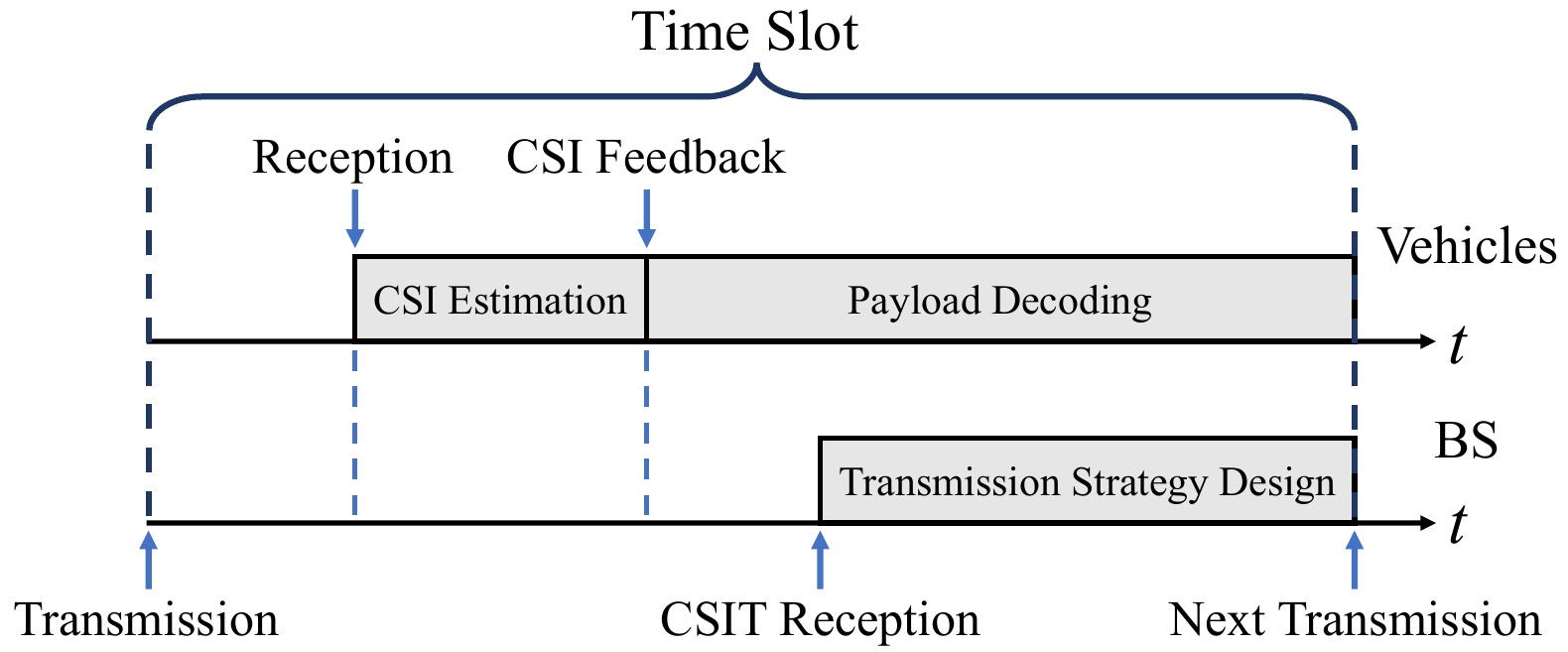}
\vspace{-5pt}
\caption{Simplified transmission model.}
\vspace{-10pt}
\label{fig2} 
\end{figure}
We adopt the Gauss–Markov channel model to characterize the outdated CSIT and assume each time slot accommodates one complete transmission and reception, as shown in Fig.~\ref{fig2}. At the beginning of a slot, the BS transmits pilots and data payloads. Vehicles first decode the pilots to estimate CSI and feed it back, then decode the payloads using SIC. Based on the acquired CSIT, the BS performs resource scheduling, encoding, and precoding for the next transmission. At the beginning of the next slot, vehicles complete decoding while the BS transmits the subsequent signal. This simplified transmission model facilitates a tractable analysis of AoI.

\begin{figure}[t] 
\centering 
\includegraphics[width=0.95\linewidth]{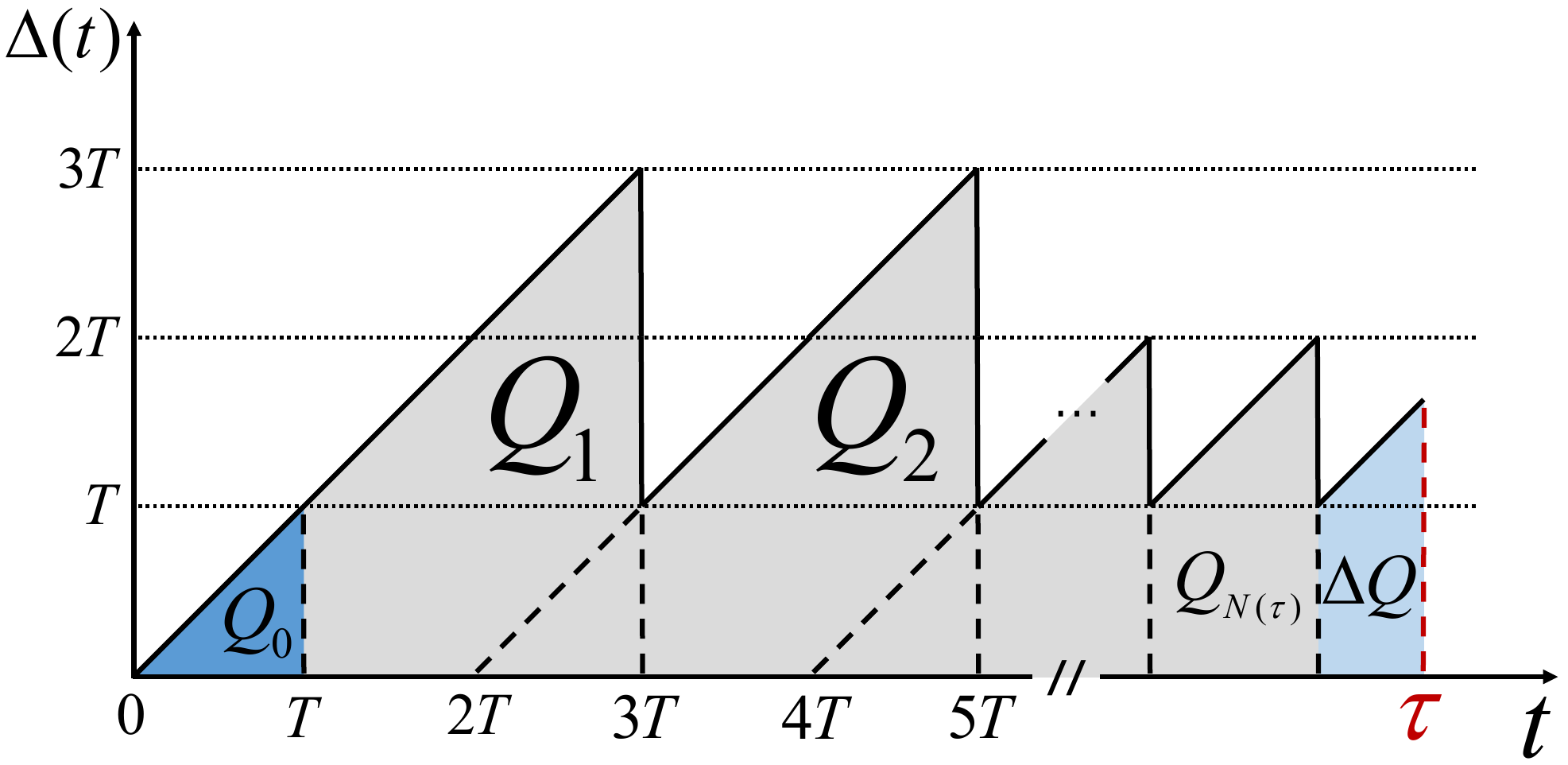}
\vspace{-5pt}
\caption{AoI evolution over time.}
\vspace{-10pt}
\label{fig3} 
\end{figure}
Fig.~\ref{fig3} illustrates the evolution of AoI over time, denoted as $\Delta(t)$. The AoI is defined as the elapsed time since the generation of the most recently successfully received information. When information reception fails, the AoI increases linearly with time, and upon a successful reception, the AoI is updated to $T$, corresponding to the time slot duration required for a complete transmission cycle. This standard linear AoI model effectively quantifies information freshness and ensures the tractability of mathematical analysis \cite{LineAoI}. Furthermore, we assume that the BS transmits only the pilots without the payloads in the first slot to acquire the initial CSIT. Since we focus on long-term information freshness, which is crucial for time-critical applications, this initial overhead is negligible. We adopt the AAoI as a metric to measure the long-term freshness performance, defined as
\begin{equation}
\label{19}
\bar{\Delta}=\lim_{\tau\to\infty}\frac{1}{\tau}\int_0^\tau\Delta(t) \, \mathrm{d}t.
\end{equation}

To facilitate the analysis of the AAoI, we denote the total number of successful receptions within time interval~$\tau$ by $N(\tau)$. The time integral of the AoI between the $(j-1)$-th and $j$-th successful receptions is denoted by $Q_j$, with $Q_0$ and $\Delta Q$ representing the area associated with the initial CSI estimation and the residual area at the end of the observation window, respectively. Based on the above definitions, Lemma~\ref{lemma6} provides the analytical expression for the AAoI under a constant reception failure probability $\epsilon$.
\begin{lemma}
\label{lemma6}
The AAoI can be expressed as
\begin{equation}
\label{20}
\bar{\Delta}=\frac{T}{2}+\frac{T}{1-\epsilon}.
\end{equation}
\end{lemma}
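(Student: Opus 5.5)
We prove Lemma~\ref{lemma6} with the standard renewal-reward decomposition of the AoI sawtooth. Since the time axis is slotted with slot length $T$ and each slot succeeds independently with probability $1-\epsilon$, the inter-reception intervals are i.i.d. We write $Y_j$ for the time between the $(j-1)$-th and $j$-th successful receptions. Then $Y_j = G_j T$, where $G_j$ is geometric on $\{1,2,\ldots\}$ with success probability $1-\epsilon$. This gives
\begin{equation*}
\mathbb{E}[Y]=\frac{T}{1-\epsilon},\qquad \mathbb{E}[Y^2]=\frac{T^2(1+\epsilon)}{(1-\epsilon)^2}.
\end{equation*}
Using the definitions of $N(\tau)$, $Q_j$, $Q_0$ and $\Delta Q$, we split the integral in (\ref{19}) as
\begin{equation*}
\frac{1}{\tau}\int_0^\tau \Delta(t)\,\mathrm{d}t=\frac{Q_0+\Delta Q}{\tau}+\frac{N(\tau)}{\tau}\cdot\frac{1}{N(\tau)}\sum_{j=1}^{N(\tau)}Q_j .
\end{equation*}

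Next we bound the boundary terms. $Q_0$ is a finite area tied to the pilot-only first slot. $\Delta Q$ is at most a trapezoid over the last, incomplete interval, whose length has finite second moment. So $(Q_0+\Delta Q)/\tau\to 0$ almost surely. By the elementary renewal theorem, $N(\tau)/\tau\to 1/\mathbb{E}[Y]$, and by the strong law of large numbers the average of the $Q_j$ tends to $\mathbb{E}[Q]$. Hence
\begin{equation*}
\bar{\Delta}=\frac{\mathbb{E}[Q]}{\mathbb{E}[Y]}.
\end{equation*}

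The key step is computing $Q_j$ from the geometry of Fig.~\ref{fig3}. Just after the $(j-1)$-th success the AoI equals $T$, and it then grows linearly with slope one over the interval of length $Y_j$. The area is therefore a trapezoid,
\begin{equation*}
Q_j = T Y_j + \tfrac12 Y_j^2 .
\end{equation*}
Taking expectations and dividing by $\mathbb{E}[Y]$ gives
\begin{equation*}
\bar{\Delta}=T+\frac{\mathbb{E}[Y^2]}{2\mathbb{E}[Y]}=T+\frac{T(1+\epsilon)}{2(1-\epsilon)}.
\end{equation*}
This simplifies algebraically to $\frac{T}{2}+\frac{T}{1-\epsilon}$, which is (\ref{20}).

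The main obstacle is fixing the AoI reset level and the trapezoid convention so that they match the paper's model. The reset value $T$ reflects the one-slot delivery delay; if it were taken as $0$, one would get only $\frac{T(1+\epsilon)}{2(1-\epsilon)}$. I will therefore take the reset value to be exactly $T$ at each success, as stated in the text, and verify the trapezoid area against Fig.~\ref{fig3}. The limit arguments for the boundary terms are routine.
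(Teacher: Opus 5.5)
Your proposal is correct and follows the paper's proof essentially step for step. It uses the same split into $Q_0$, $\sum_j Q_j$ and $\Delta Q$, the same renewal limit $\bar{\Delta}=\mathbb{E}[Q]/\mathbb{E}[Y]$ with trapezoid areas $Q_j=T^2(W_j^2/2+W_j)$ (your $TY_j+Y_j^2/2$ with $Y_j=TW_j$), and the same geometric moments. Your handling of the boundary term $\Delta Q$ is slightly more careful than the paper's, which simply drops it. Strictly, the almost-sure limit of $N(\tau)/\tau$ is the renewal strong law rather than the elementary renewal theorem, but this does not affect the argument.
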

\begin{proof}
See Appendix~\ref{appendixD}.
\end{proof}

In the investigated scenario, the error probability varies dynamically due to random channel fading, making it difficult to derive closed-form AAoI expressions. To obtain a constant reception failure probability for tractable analysis, we approximate the instantaneous error probability using the average BLER. Theorems~\ref{theorem1} and \ref{theorem2} disclose the analytical AAoI for the common stream and the overall system, respectively.
\begin{theorem}
\label{theorem1}
The common stream AAoI at vehicle-$k$ is given by
\begin{equation}
\label{21}
\Delta_{c,k} = \frac{T}{2} + T \Bigl( \frac{\widetilde{\theta}_{2,k}}{\alpha_c}(2^{\frac{m_c}{n_c}}-1)+1 \Bigr)^{\widetilde{D}_{2,k}} \!\!
e^{\tfrac{2^{\frac{m_c}{n_c}}-1}{P_n\xi_k\alpha_c}}.
\end{equation}
\end{theorem}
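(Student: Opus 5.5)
The proof plugs the approximate common-stream BLER of Proposition~\ref{proposition1} into the constant-failure-probability AAoI formula of Lemma~\ref{lemma6}. Under partial decoding, the common stream at vehicle-$k$ is delivered whenever its own decoding succeeds, regardless of the private stream. So the reception failure event for the common-stream AoI process is just the common-stream decoding failure. Following the approximation stated before the theorem, we replace the fluctuating instantaneous error probability by its average, $\epsilon = \widetilde{\varepsilon}_{c,k}$. This makes the per-slot failure probability a constant, independent across slots.

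With this identification, the hypotheses of Lemma~\ref{lemma6} are met. The AoI resets to $T$ after each successful common-stream reception and grows linearly otherwise, and successes occur as i.i.d.\ Bernoulli trials with success probability $1-\widetilde{\varepsilon}_{c,k}$. Lemma~\ref{lemma6} then gives
\begin{equation*}
\Delta_{c,k} = \frac{T}{2} + \frac{T}{1-\widetilde{\varepsilon}_{c,k}}.
\end{equation*}

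From (\ref{16}), $1-\widetilde{\varepsilon}_{c,k}$ equals
\begin{equation*}
e^{-\frac{2^{m_c/n_c}-1}{P_n\xi_k\alpha_c}}\Bigl(\frac{\widetilde{\theta}_{2,k}}{\alpha_c}(2^{\frac{m_c}{n_c}}-1)+1\Bigr)^{-\widetilde{D}_{2,k}}.
\end{equation*}
Taking its reciprocal turns the exponential into $e^{+(2^{m_c/n_c}-1)/(P_n\xi_k\alpha_c)}$ and the power into $(\cdot)^{+\widetilde{D}_{2,k}}$. Substituting this reciprocal yields exactly (\ref{21}).

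The algebra is trivial. The only step needing care is the modelling argument for why the failure probability is $\widetilde{\varepsilon}_{c,k}$ rather than the overall $\widetilde{\varepsilon}_k$ of (\ref{18}). The reason is that the common-stream AoI counts an update as received once $s_c$ is decoded, which is what partial decoding means. The other point to justify is that averaging the BLER over fading, and treating slots as independent, is legitimate for applying Lemma~\ref{lemma6}. Since the outdated CSIT model makes consecutive channels correlated, I will present this explicitly as the approximation already adopted in the text, not as an exact step.
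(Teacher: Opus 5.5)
Your proposal is correct and follows the paper's own argument. The paper likewise approximates the reception failure probability $\epsilon$ in Lemma~\ref{lemma6} by the average common-stream BLER $\widetilde{\varepsilon}_{c,k}$ of (\ref{16}) and substitutes it into (\ref{20}), and your reciprocal computation reproduces (\ref{21}) exactly; your explicit remarks on partial decoding and on treating slots as i.i.d.\ only spell out approximations the paper adopts in a single line.
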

\begin{theorem}
\label{theorem2}
The overall AAoI at vehicle-$k$ is given by
\begin{align}
\label{22}
\Delta_k =& \frac{T}{2} + T \Bigl( \frac{\widetilde{\theta}_{2,k}}{\alpha_c}(2^{\frac{m_c}{n_c}}-1)+1 \Bigr)^{\widetilde{D}_{2,k}} \nonumber \\
& \Bigl( \frac{\widetilde{\theta}_{3,k}}{\widetilde{D}_1\widetilde{\theta}_1\alpha_k}(2^{\frac{m_k}{n_k}}-1)+1 \Bigr)^{\widetilde{D}_{3,k}}\!\!
e^{\tfrac{2^{\frac{m_c}{n_c}}-1}{P_n\xi_k\alpha_c} + \tfrac{2^{\frac{m_k}{n_k}}-1}{P_n\xi_k\widetilde{D}_1\widetilde{\theta}_1\alpha_k}}.
\end{align}
\begin{proof}
Employ the average BLERs in Propositions~\ref{proposition1} and \ref{proposition2} to approximate the reception failure probability $\epsilon$ in Lemma~\ref{lemma6}. Consequently, substituting (\ref{16}) and (\ref{18}) into (\ref{20}) yields the analytical expressions for the common stream AAoI $\Delta_{c,k}$ and the overall AAoI $\Delta_k$.
\end{proof}
\end{theorem}

\section{Problem Formulation and Solution}
\label{section IV}
This section focuses on improving the AoI performance of the short-packet rate-splitting scheme. To this end, we jointly optimize the power allocation factors and the rate-splitting factors to minimize the average overall AAoI of vehicles. The optimization problem is formulated as follows:
\begin{subequations}
\label{P0}
\begin{align}
\mathcal{P}_0: \;
\min_{\alpha_c,\boldsymbol{\alpha},\boldsymbol{\psi}} \quad
&\frac{1}{K} \sum_{k\in\mathcal{K}} \Delta_k \\
\text{s.t.} \quad\;\;
& \alpha_c + \sum_{k\in\mathcal{K}}\alpha_k =1, \label{23b} \\
& 0 \le \alpha_c \le 1, \label{23c} \\
& 0 \le \alpha_k \le 1, \forall k\in\mathcal{K}, \label{23d} \\
& 0 \le \psi_k \le 1, \forall k\in\mathcal{K}, \label{23e} \\
& \Delta_{c,k} \le \lambda\Delta_k, \forall k\in\mathcal{K}, \label{23f}
\end{align}
\end{subequations}
where $\alpha_c$ and $\boldsymbol{\alpha} = [\alpha_1,\alpha_2,\dots,\alpha_K]$ represent the power allocation factors for the common and private streams, respectively, and $\boldsymbol{\psi} = [\psi_1,\psi_2,\dots,\psi_K]$ denotes the rate-splitting factors, indicating the proportion of each vehicle’s unicast message assigned to the common stream. Constraints (\ref{23b}), (\ref{23c}), and (\ref{23d}) ensure that the total transmit power is properly allocated among all streams. Constraint (\ref{23e}) guarantees the valid rate-splitting configuration. Constraint (\ref{23f}) regulates the rate splitting to ensure that the common stream AAoI remains lower than the overall AAoI, which can be used to meet stricter freshness requirements, where $\lambda$ is the constraint parameter.
\begin{remark}
The optimization aims to minimize the average overall AAoI. However, this objective may lead to excessive allocation of unicast messages to the common stream, thereby significantly deteriorating the common stream AAoI. Given that the overall AAoI gains are often marginal in such extreme cases, the QoS trade-off constraint (\ref{23f}) is introduced to prevent excessive rate splitting. With this constraint, the overall AAoI is only slightly compromised, while the common stream maintains satisfactory performance.
\end{remark}
The original problem \hyperref[P0]{$\mathcal{P}_0$} is highly non-convex and exhibits strong coupling among the variables $\alpha_c$, $\boldsymbol{\alpha}$, and $\boldsymbol{\psi}$, making a direct solution intractable. Furthermore, applying exhaustive search directly incurs prohibitive computational complexity. To reduce complexity while achieving effective optimization, we propose a multi-start two-step SCA algorithm. Specifically, the original problem \hyperref[P0]{$\mathcal{P}_0$} is decomposed into two subproblems. We initialize multiple rate-splitting starting points. For each starting point, the optimization proceeds in two steps: step~1 optimizes the power allocation factors via SCA, and step~2 optimizes the rate-splitting factors via SCA subject to constraint (\ref{23f}). Finally, the optimal solution is chosen from all starting points as the final result.
\begin{remark}
Given that the solution to non-convex problems depends on initialization, we adopt a multi-start strategy to address this. Extensive simulations indicate that initializing with larger rate-splitting starting points generally yields suboptimal performance. Moreover, the step~1 optimization is insensitive to minor variations in starting points. Therefore, only a few starting points are sufficient to cover the search space to obtain a desired result. Regarding the two-step optimization, we intentionally disregard constraint (\ref{23f}) in step~1. Since the constraint primarily regulates rate splitting, temporarily disregarding it simplifies the optimization and allows the power allocation to fully exploit its potential. Although the solution from step~1 may violate constraint (\ref{23f}), such cases are rare in typical settings. Even if a violation occurs, the step~2 optimization can effectively restore feasibility, which may cause the objective function to increase in the first iteration, but it still ensures efficient convergence thereafter.
\end{remark}

\subsection{Power Allocation Optimization}
In this subsection, rate splitting is temporarily ignored to focus on power allocation optimization. Accordingly, the rate-splitting constraint (\ref{23e}) and the QoS constraint (\ref{23f}) are omitted. To clearly distinguish the objective functions of the two-step optimization, we denote the overall AAoI as $\Delta_{k,\text{step1}}$ in this power allocation optimization, and the subproblem is formulated as:
\begin{subequations}
\label{P1}
\begin{align}
\mathcal{P}_1: \;
\min_{\alpha_c,\boldsymbol{\alpha}} \quad
&\frac{1}{K} \sum_{k\in\mathcal{K}} \Delta_{k,\text{step1}} \\
\text{s.t.} \quad\,
& \alpha_c + \sum_{k\in\mathcal{K}}\alpha_k =1, \\
& 0 \le \alpha_c \le 1, \\
& 0 \le \alpha_k \le 1, \forall k\in\mathcal{K}. 
\end{align}
\end{subequations}
We solve this non-convex problem using SCA. The notation $(\cdot)^{(t)}$ denotes the variable values obtained after the $t$-th SCA iteration. Based on these values, we subsequently construct the surrogate problem for \hyperref[P1]{$\mathcal{P}_1$} at the $(t+1)$-th iteration.

The strong coupling of $\alpha_c$ and $\boldsymbol{\alpha}$ across $\widetilde{D}_{2,k}$, $\widetilde{\theta}_{2,k}$, $\widetilde{D}_{3,k}$, and $\widetilde{\theta}_{3,k}$ renders \hyperref[P1]{$\mathcal{P}_1$} intractable. To address this, we introduce auxiliary variables to replace the coupled terms: $d_{2,k} = \widetilde{D}_{2,k}(\alpha_c,\boldsymbol{\alpha})$, $o_{2,k} = \widetilde{\theta}_{2,k}(\alpha_c,\boldsymbol{\alpha})$, $d_{3,k} = \widetilde{D}_{3,k}(\alpha_c,\boldsymbol{\alpha})$, $o_{3,k} = \widetilde{\theta}_{3,k}(\alpha_c,\boldsymbol{\alpha})$. The equality constraints introduced by the auxiliary variables are non-convex, and we relax them along the direction of optimization. Specifically, we convert these equalities into inequalities and further approximate the non-convex terms using first-order Taylor expansions. Accordingly, the relaxed constraints corresponding to $d_{2,k}$, $o_{2,k}$, $d_{3,k}$, and $o_{3,k}$ can be expressed as (\ref{25a}), (\ref{25b}), (\ref{25c}), and (\ref{25d}), respectively.
\begin{subequations}
\begin{align}
& \!\left[ (N_t-K+1)\rho^2\alpha_k + (1-\rho^2)(1-\alpha_c) \right]^2 + \frac{a_k^2+d_{2,k}^2}{2} \nonumber \\
& \quad \le \Bigl( a_k^{(t)}+d_{2,k}^{(t)} \Bigr) \Bigl(a_k+d_{2,k}-\frac{a_k^{(t)}+d_{2,k}^{(t)}}{2} \Bigr), \label{25a}\\
& (N_t-K+1)\rho^4\alpha_k^2 + (1-\rho^2)^2 \sum_{j\in\mathcal{K}}\alpha_j^2 - (1-\rho^2)o_{2,k} \nonumber \\
& \qquad + \frac{(N_t-K+1)\rho^2}{2}(\alpha_k^2+o_{2,k}^2) + \frac{1-\rho^2}{2}(\alpha_c+o_{2,k})^2 \nonumber \\
& \quad \le (N_t\!-\!K\!+\!1)\rho^2 \Bigl( \alpha_k^{(t)}\!+\!o_{2,k}^{(t)} \Bigr) \! \Bigl( \alpha_k\!+\!o_{2,k}\!-\!\frac{\alpha_k^{(t)}\!+\!o_{2,k}^{(t)}}{2} \Bigr) \nonumber \\
& \qquad + (1\!-\!\rho^2) \Bigl[ \alpha_c^{(t)}\alpha_c \!+\! o_{2,k}^{(t)}o_{2,k} \!-\! \frac{(\alpha_c^{(t)})^2\!+\!(o_{2,k}^{(t)})^2}{2} \Bigr],\! \label{25b}\\
& (1-\alpha_c-\alpha_k)^2 + \frac{b_k^2+d_{3,k}^2}{2} \nonumber \\
& \quad \le \Bigl( b_k^{(t)}+d_{3,k}^{(t)} \Bigr) \Bigl( b_k+d_{3,k}-\frac{b_k^{(t)}+d_{3,k}^{(t)}}{2} \Bigr), \label{25c} \\
& (1-\rho^2)\!\!\!\sum_{j\in\mathcal{K}\setminus\{k\}}\!\!\!\alpha_j^2 + \frac{(\alpha_k+o_{3,k})^2}{2} + \frac{(\alpha_c+o_{3,k})^2}{2} - o_{3,k} \nonumber \\
& \quad \le \alpha_k^{(t)} \! \Bigl( \alpha_k\!-\!\frac{\alpha_k^{(t)}}{2} \Bigr) \!+\! \alpha_c^{(t)}\!\Bigl( \alpha_c\!-\!\frac{\alpha_c^{(t)}}{2} \Bigr) \!+\! o_{3,k}^{(t)}\Bigl( 2o_{3,k}\!-\!o_{3,k}^{(t)} \Bigr), \label{25d}
\end{align}
\end{subequations}
where $a_k=(N_t-K+1)\rho^4\alpha_k^2+(1-\rho^2)^2\sum_{j\in\mathcal{K}}\alpha_j^2$ and $b_k=\sum_{j\in\mathcal{K}\setminus\{k\}}\alpha_j^2$ are auxiliary variables. The corresponding relaxed constraints can similarly be expressed as (\ref{26a}) and (\ref{26b}), respectively.
\begin{subequations}
\begin{align}
a_k \le&\; (N_t-K+1)\rho^4 \Bigl[ 2\alpha_k^{(t)}\alpha_k-(\alpha_k^{(t)})^2 \Bigr] \nonumber \\
& + (1-\rho^2)^2\sum_{j\in\mathcal{K}} \Bigl[ 2\alpha_j^{(t)}\alpha_j-(\alpha_j^{(t)})^2 \Bigr], \label{26a} \\
b_k \le& \!\!\!\sum_{j\in\mathcal{K}\setminus\{k\}}\! \Bigl[ 2\alpha_j^{(t)}\alpha_j-(\alpha_j^{(t)})^2 \Bigr]. \label{26b}
\end{align}
\end{subequations}

To facilitate subsequent optimization, we denote the overall AAoI as $\Delta_{k,\text{step1}} = T/2 + Tg_k$, the non-optimized terms as $A_c=2^{m_c/n_c}-1$ and $A_k= (2^{m_k/n_k}-1) / (\widetilde{D}_1\widetilde{\theta}_1)$. Based on (\ref{22}) and by substituting $A_c$ and $A_k$, $g_k$ can be expressed as
\begin{equation}
\label{27}
g_k = \Bigl( A_c\frac{o_{2,k}}{\alpha_c}\!+\!1 \Bigr)^{\!d_{2,k}} \! \Bigl( A_k\frac{ o_{3,k}}{\alpha_k}\!+\!1 \Bigr)^{\!d_{3,k}} \! e^{\tfrac{1}{P_n\xi_k}\!\Bigl(\! \tfrac{A_c}{\alpha_c}+\tfrac{A_k}{\alpha_k} \!\Bigr)}.
\end{equation}
We first apply a logarithmic transformation to (\ref{27}), yielding
\begin{align}
\label{28}
\ln g_k =&\; d_{2,k} \ln \Bigl( A_c\frac{o_{2,k}}{\alpha_c}+1 \Bigr) + d_{3,k} \ln \Bigl( A_k\frac{o_{3,k}}{\alpha_k}+1 \Bigr) \nonumber\\
&+ \frac{A_c}{P_n\xi_k\alpha_c} + \frac{A_k}{P_n\xi_k\alpha_k}.
\end{align}
According to the concavity property of logarithmic functions~\cite{logarithmic}, we apply first-order Taylor expansions to obtain upper bounds for the logarithmic terms in (\ref{28}). To handle the resulting product terms, we apply Young’s inequality $xy \le x^p/p+y^q/q$~\cite{inequality}, choosing $p=q=2$ and substituting the variables as $x\sqrt{y^{(t)}/x^{(t)}} \rightarrow x$, $y\sqrt{x^{(t)}/y^{(t)}} \rightarrow y$, which yields $xy \le \frac{1}{2}(x^2y^{(t)}/x^{(t)}+y^2x^{(t)}/y^{(t)})$. By further applying
this inequality, the convex upper bound of $\ln g_k$ can be expressed as
\begin{align}
\ln \!g_k \!\le&\; \widetilde{g}_k^{(t)} \nonumber \\
=& \ln(A_c c_k^{(t)}\!+\!1)d_{2,k} \!+\! \frac{A_c}{2A_c c_k^{(t)}\!+\!2} \Bigl( c_k^2\frac{d_{2,k}^{(t)}}{c_k^{(t)}} \!+\! d_{2,k}^2\frac{c_k^{(t)}}{d_{2,k}^{(t)}} \Bigr) \nonumber \\
& +\! \ln(A_k e_k^{(t)}\!+\!1)d_{3,k} \!+\! \frac{A_k}{2 A_k e_k^{(t)}\!+\!2} \Bigl( e_k^2\frac{d_{3,k}^{(t)}}{e_k^{(t)}} \!+\! d_{3,k}^2\frac{e_k^{(t)}}{d_{3,k}^{(t)}} \Bigr) \nonumber \\
& +\! \frac{A_c}{P_n\xi_k\alpha_c} \!+\! \frac{A_k}{P_n\xi_k\alpha_k} \!-\! \frac{A_c c_k^{(t)}}{A_c c_k^{(t)}\!+\!1}d_{2,k} \!-\! \frac{A_k e_k^{(t)}}{A_k e_k^{(t)}\!+\!1}d_{3,k},
\end{align}
where $c_k = o_{2,k}/\alpha_c$ and $e_k = o_{3,k}/\alpha_k$ are auxiliary variables, and the corresponding relaxed constraints can be expressed as
\begin{subequations}
\begin{align}
o_{2,k} \!+\! \frac{\alpha_c^2\!+\!c_k^2}{2} &\le \Bigl( \alpha_c^{(t)}\!+\!c_k^{(t)} \Bigr)\! \Bigl( \alpha_c\!+\!c_k\!-\!\frac{\alpha_c^{(t)}\!+\!c_k^{(t)}}{2} \Bigr), \label{30a} \\
o_{3,k} \!+\! \frac{\alpha_k^2\!+\!e_k^2}{2} &\le \Bigl( \alpha_k^{(t)}\!+\!e_k^{(t)} \Bigr)\! \Bigl( \alpha_k\!+\!e_k\!-\!\frac{\alpha_k^{(t)}\!+\!e_k^{(t)}}{2} \Bigr). \label{30b}
\end{align}
\end{subequations}
Therefore, the surrogate function of $\Delta_{k,\text{step1}}$ is given by
\begin{equation}
\widetilde{\Delta}_{k,\text{step1}}^{(t)}=\frac{T}{2}+Te^{\widetilde{g}_k^{(t)}}.
\end{equation}

Let $\Omega = \{ d_{2,k},o_{2,k},d_{3,k},o_{3,k},a_k,b_k,c_k,e_k \}_{k \in \mathcal{K}}$ denote the set of all auxiliary variables. Finally, the subproblem \hyperref[P1]{$\mathcal{P}_1$} is transformed into \hyperref[P2]{$\mathcal{P}_2$} at the $(t+1)$-th iteration, given by
\begin{subequations}
\label{P2}
\begin{align}
\mathcal{P}_2: 
\min_{\alpha_c,\boldsymbol{\alpha},\Omega} \quad 
& \frac{1}{K} \sum_{k\in\mathcal{K}} \widetilde{\Delta}_{k,\text{step1}}^{(t)} \\
\text{s.t.} \;\;\quad
& \alpha_c + \sum_{k\in\mathcal{K}}\alpha_k =1, \\
& 0 \le \alpha_c \le 1,\\
& 0 \le \alpha_k \le 1, \forall k\in\mathcal{K},\\
& (\ref{25a})\text{--}(\ref{25d}), \forall k\in\mathcal{K}, \\
& (\ref{26a}), (\ref{26b}), \forall k\in\mathcal{K}, \\
& (\ref{30a}), (\ref{30b}), \forall k\in\mathcal{K}.
\end{align}
\end{subequations}
The surrogate problem \hyperref[P2]{$\mathcal{P}_2$} is convex and can be directly solved using the CVX toolbox. By iteratively solving it, \hyperref[P1]{$\mathcal{P}_1$} is solved to determine the optimal power allocation.

\subsection{Rate-Splitting Optimization}
After optimizing power allocation, we proceed to optimize rate splitting. Assigning a portion of a unicast message to the common stream increases its transmission load while reducing that of the corresponding private stream. This indicates that an appropriate rate-splitting strategy can further improve the overall AAoI. We denote the overall AAoI as $\Delta_{k,\text{step2}}$ for clarity and impose the QoS constraint (\ref{23f}) to regulate the rate splitting. The subproblem is then formulated as:
\begin{subequations}
\label{P3}
\begin{align}
\mathcal{P}_3:\;
\min_{\boldsymbol{\psi}} \quad 
&\frac{1}{K} \sum_{k\in\mathcal{K}} \Delta_{k,\text{step2}} \\
\text{s.t.} \quad\,
& 0 \le \psi_k \le 1, \forall k\in\mathcal{K}, \\
& \Delta_{c,k} \le \lambda\Delta_{k,\text{step2}}, \forall k\in\mathcal{K}. \label{33c}
\end{align}
\end{subequations}
Similarly, SCA is employed to solve the subproblem and $(\cdot)^{(t)}$ is used to denote the variable values obtained after the $t$-th iteration. Subsequently, we construct the surrogate problem for \hyperref[P3]{$\mathcal{P}_3$} at the $(t+1)$-th iteration.

We assume that each user has the same communication requirement of transmitting $m_0$ information bits, which consist of the multicast message and unicast message, and use $k_0$ to represent the proportion of the multicast message. Therefore, the number of information bits of the common and the $k$-th private streams are given by $m_c = k_0m_0+(1-k_0)m_0\sum_{j\in\mathcal{K}}\psi_j$ and $m_k = (1-k_0)(1-\psi_k)m_0$, respectively. To simplify the objective function, we introduce auxiliary variables $\beta_c = 2^{m_c/n_c}-1$ and $\beta_k = 2^{m_k/n_k}-1$, and the corresponding constraints can be relaxed as
\begin{subequations}
\begin{align}
k_0 + (1-k_0)\sum_{j\in\mathcal{K}}\psi_j - \frac{n_c}{m_0}\log_2(\beta_c+1) &\le 0, \label{34a} \\
(1-k_0)(1-\psi_k) - \frac{n_k}{m_0}\log_2(\beta_k+1) &\le 0. \label{34b} 
\end{align}
\end{subequations}
Then, we denote the overall AAoI as $\Delta_{k,\text{step2}} = T/2 + Th_k$. Based on (\ref{22}) and by substituting $\beta_c$ and $\beta_k$, $h_k$ can be expressed as
\begin{equation}
\label{35}
h_k \!=\! \Bigl( \frac{\widetilde{\theta}_{2,k}}{\alpha_c}\beta_c + 1 \Bigr)^{\!\widetilde{D}_{2,k}}\! \Bigl( \frac{\widetilde{\theta}_{3,k}}{\widetilde{D}_1\widetilde{\theta}_1\alpha_k}\beta_k +1 \Bigr)^{\!\widetilde{D}_{3,k}}\!\! e^{\tfrac{1}{P_n\xi_k}\! \Bigl(\! \tfrac{\beta_c}{\alpha_c} + \tfrac{\beta_k}{\widetilde{D}_1\widetilde{\theta}_1\alpha_k} \!\Bigr)}.
\end{equation}
Applying a logarithmic transformation to (\ref{35}) yields
\begin{align}
\label{36}
\ln h_k =&\; \widetilde{D}_{2,k}\ln \Bigl( \frac{\widetilde{\theta}_{2,k}}{\alpha_c}\beta_c+1 \Bigr) + \widetilde{D}_{3,k}\ln \Bigl( \frac{\widetilde{\theta}_{3,k}}{\widetilde{D}_1\widetilde{\theta}_1\alpha_k}\beta_k+1 \Bigr) \nonumber\\
&+ \frac{\beta_c}{P_n\xi_k\alpha_c}+\frac{\beta_k}{P_n\xi_k\widetilde{D}_1\widetilde{\theta}_1\alpha_k}.
\end{align}
Since the power allocation has been determined, (\ref{36}) only depends on $\beta_c$ and $\beta_k$. Similarly, we apply first-order Taylor expansions to obtain upper bounds for the logarithmic terms in (\ref{36}), which yields the convex upper bound of $\ln h_k$:
\begin{align}
\ln h_k \le&\; \widetilde{h}_k^{(t)} \nonumber\\
=&\; \widetilde{D}_{2,k} \Bigl[ \ln \Bigl( \frac{\widetilde{\theta}_{2,k}\beta_c^{(t)}}{\alpha_c}+1 \Bigr) + \frac{\widetilde{\theta}_{2,k}(\beta_c-\beta_c^{(t)})}{\widetilde{\theta}_{2,k}\beta_c^{(t)}+\alpha_c} \Bigr] \nonumber\\
& + \widetilde{D}_{3,k} \Bigl[ \ln \Bigl( \frac{\widetilde{\theta}_{3,k}\beta_k^{(t)}}{\widetilde{D}_1\widetilde{\theta}_1\alpha_k}+1 \Bigr) + \frac{\widetilde{\theta}_{3,k}(\beta_k-\beta_k^{(t)})}{\widetilde{\theta}_{3,k}\beta_k^{(t)}+\widetilde{D}_1\widetilde{\theta}_1\alpha_k} \Bigr] \nonumber\\ 
& + \frac{\beta_c}{P_n\xi_k\alpha_c} + \frac{\beta_k}{P_n\xi_k\widetilde{D}_1\widetilde{\theta}_1\alpha_k}.
\end{align}
Therefore, the surrogate function of $\Delta_{k,\text{step2}}$ is given by
\begin{equation}
\widetilde{\Delta}_{k,\text{step2}}^{(t)} = \frac{T}{2} + Te^{\widetilde{h}_k^{(t)}}.
\end{equation}

In addition, we denote the common stream AAoI as $\Delta_{c,k} = T/2 + Th_{c,k}$, so constraint (\ref{33c}) can be rewritten as
\begin{equation}
\label{39}
h_{c,k} \le \lambda h_k + \frac{\lambda-1}{2},
\end{equation}
where $h_{c,k}$ can be expressed as
\begin{equation}
\label{40}
h_{c,k} = \Bigl( \frac{\widetilde{\theta}_{2,k}}{\alpha_c}\beta_c+1 \Bigr)^{\widetilde{D}_{2,k}} e^{\tfrac{\beta_c}{P_n\xi_k\alpha_c}}.
\end{equation}
Constraint (\ref{39}) is non-convex, where the parameter term $(\lambda-1)/2$ makes it difficult to transform into a convex form. Thus, we neglect this term and approximate (\ref{39}) as $h_{c,k} \le \lambda h_k$. By substituting (\ref{35}) and (\ref{40}), and then taking a logarithmic transformation, the relaxed constraint can be expressed as
\begin{equation}
\label{41}
-\widetilde{D}_{3,k} \ln \Bigl( \frac{\widetilde{\theta}_{3,k}}{\widetilde{D}_1\widetilde{\theta}_1\alpha_k}\beta_k+1 \Bigr)
 - \frac{\beta_k}{P_n\xi_k\widetilde{D}_1\widetilde{\theta}_1\alpha_k} \le \ln \lambda.
\end{equation}
\begin{remark}
In fact, neglecting the parameter term enlarges the feasible set. However, this relaxation has a negligible impact on the validity of the constraint. The relaxed constraint (\ref{41}) still effectively regulates the rate-splitting factors $\boldsymbol{\psi}$ by restricting the relationship between the common stream AAoI and the overall AAoI.
\end{remark}
Finally, the subproblem \hyperref[P3]{$\mathcal{P}_3$} is transformed into \hyperref[P4]{$\mathcal{P}_4$} at the $(t+1)$-th iteration, given by
\begin{subequations}
\label{P4}
\begin{align}
\mathcal{P}_4: \;
\min_{ \boldsymbol{\psi},\beta_c,\{ \beta_k\! \}_{\! k \mkern-1.5mu\in\mkern-1.5mu \mathcal{K}} } \quad
& \frac{1}{K} \sum_{k\in\mathcal{K}} \widetilde{\Delta}_{k,\text{step2}}^{(t)} \\
\text{s.t.} \qquad\;\,\,
& 0 \le \psi_k \le 1, \forall k\in\mathcal{K}, \\
& (\ref{34a}), \\
& (\ref{34b}), \forall k\in\mathcal{K}, \\
& (\ref{41}), \forall k\in\mathcal{K}.
\end{align}
\end{subequations}
The surrogate problem \hyperref[P4]{$\mathcal{P}_4$} is convex and can be directly solved with the CVX toolbox. By iteratively solving it, we can solve \hyperref[P3]{$\mathcal{P}_3$} and obtain the optimal rate splitting. 

In summary, the overall algorithm for solving \hyperref[P0]{$\mathcal{P}_0$} is outlined in Algorithm \ref{algorithm1}.
\begin{algorithm}[t]
\caption{The Multi-Start Two-Step SCA Algorithm}
\label{algorithm1}
\begin{algorithmic}[1]
\STATE Initialize multiple starting points for $\boldsymbol{\psi}$.
\FOR{each starting point $\boldsymbol{\psi}^{(i)}$}
    \STATE Optimize $\alpha_c$ and $\boldsymbol{\alpha}$ by iteratively solving \hyperref[P2]{$\mathcal{P}_2$} with $\boldsymbol{\psi}^{(i)}$ fixed.
    \STATE Optimize $\boldsymbol{\psi}$ by iteratively solving \hyperref[P4]{$\mathcal{P}_4$} with $\alpha_c^{*(i)}$ and $\boldsymbol{\alpha}^{*(i)}$ fixed.
\ENDFOR
\STATE Select the solution with the minimum objective function value among all starting points as the final result for \hyperref[P0]{$\mathcal{P}_0$}.
\STATE \textbf{Return:} $\alpha_c^*$, $\boldsymbol{\alpha}^*$, $\boldsymbol{\psi}^*$.
\end{algorithmic}
\end{algorithm}

\subsection{Complexity Analysis}
The algorithm decomposes the original problem into two subproblems and sequentially optimizes power allocation and rate splitting via SCA. To mitigate potential bias from different initializations, we adopt the multi-start strategy that selects a few starting points and finally chooses the best solution. Since this strategy only affects the search process and does not alter the complexity of each optimization, it can be neglected in the complexity analysis. Consequently, the total complexity only depends on the two-step SCA procedure, where each iteration applies the interior-point method to solve. Specifically, \hyperref[P2]{$\mathcal{P}_2$} involves $9K+1$ variables and $9K+2$ constraints, yielding a complexity of $\mathcal{O}\left( N_{1}(9K+1)^{3}\sqrt{9K+2} \right)$, where $N_{1}$ denotes the number of SCA iterations. Similarly, \hyperref[P4]{$\mathcal{P}_4$} involves $2K+1$ variables and $3K+1$ constraints, resulting in a complexity of $\mathcal{O}\left( N_{2}(2K+1)^{3}\sqrt{3K+1} \right)$, where $N_{2}$ is the number of SCA iterations. Therefore, the total computational complexity is $\mathcal{O}\left( N_{1}(9K\!+\!1)^{3}\sqrt{9K\!+\!2} + N_{2}(2K\!+\!1)^{3}\sqrt{3K\!+\!1} \right)$, which can be further simplified as $\mathcal{O}\left( (N_{1}+N_{2})K^{3.5} \right)$.

\section{Simulation Results}
\label{section V}
In this section, simulations are carried out to evaluate the validity of the theoretical analysis and the AAoI performance of the proposed RSMA scheme. Unless otherwise specified, the default parameter configurations are as follows: the BS is equipped with five transmit antennas ($N_t=5$) and serves four single-antenna vehicles ($K=4$). The total transmit power is $P_T=35$~dBm~\cite{3gpp.37.885}. The carrier frequency is $f_c=5.9$~GHz~\cite{5.9GHz}, with a bandwidth of $B=10$~MHz and the noise power density of $\sigma^2=-174$~dBm/Hz. The vehicle velocity is $v=200$~km/h, and the transmission interval is $T=0.24$~ms. The number of information bits is $m_0=200$~bits, with the multicast message proportion of $k_0=0.3$. The blocklength is $n_c=n_k=400$ for all $k \in \mathcal{K}$~\cite{blocklength}. The distances between the BS and vehicles are uniformly distributed between 200 and 350 meters. The QoS parameter is set as $\lambda=0.8$. All simulation results are obtained via a Monte Carlo method based on $10^5$ channel realizations.

\begin{figure}[t] 
\centering 
\includegraphics[width=0.85\linewidth]{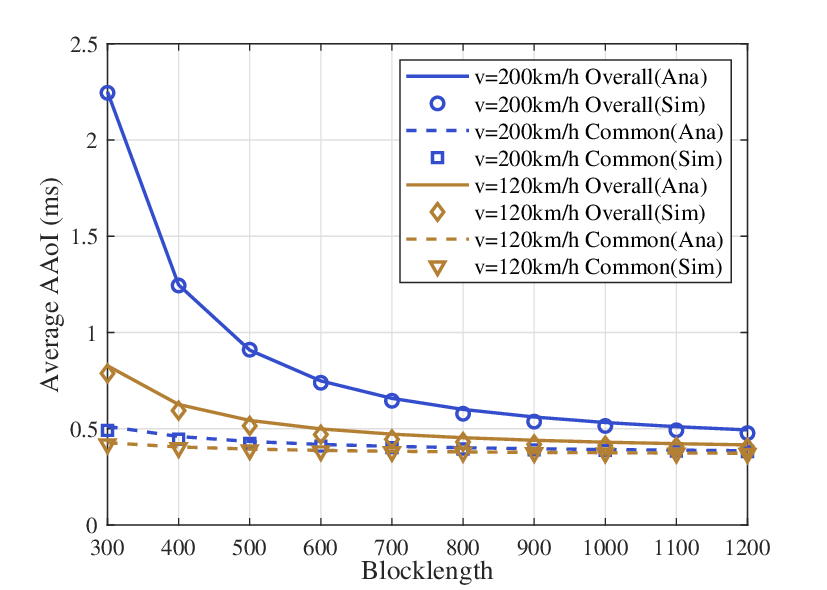}
\vspace{-10pt}
\caption{Comparison of the analytical and simulated average AAoI.}
\vspace{-15pt}
\label{fig4}
\end{figure}
We first verify the validity of the derived analytical expressions. As shown in Fig.~\ref{fig4}, the analytical results for the average AAoI of vehicles obtained from (\ref{21}) and (\ref{22}) closely match the simulation results at the two considered vehicle velocities. This close agreement confirms the accuracy of the closed-form expressions, which are employed to optimize the AAoI performance of the proposed RSMA scheme.

\begin{figure}[t] 
\centering 
\includegraphics[width=0.85\linewidth]{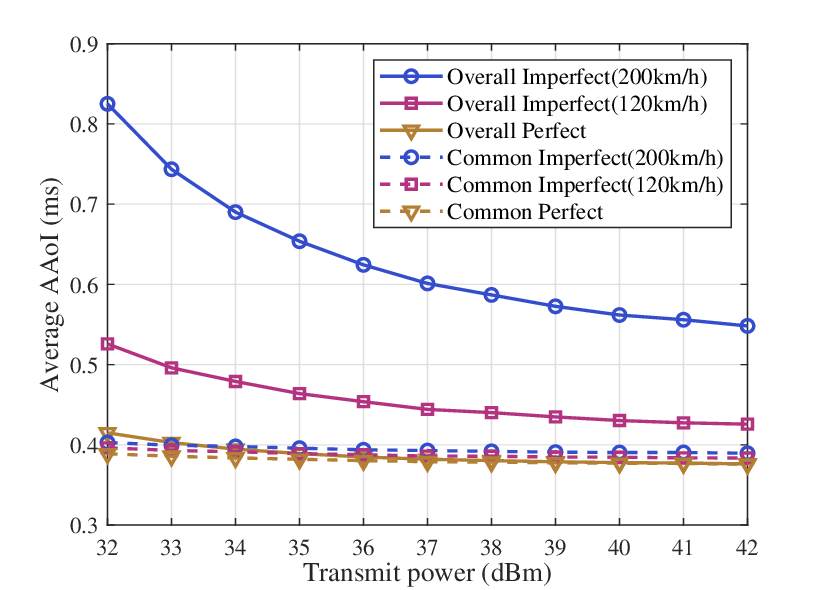}
\vspace{-10pt}
\caption{Comparison of the average AAoI under perfect and imperfect CSIT.}
\vspace{-10pt}
\label{fig5} 
\end{figure}
Fig.~\ref{fig5} illustrates the average AAoI under both perfect and imperfect CSIT. In the latter case, two vehicle velocities are considered, where the higher velocity corresponds to less accurate CSIT. It can be observed that the common stream AAoI is consistently much lower than the overall AAoI. Thus, time-sensitive information can be assigned to the common stream to satisfy its stricter AAoI requirements. Furthermore, as CSIT accuracy deteriorates, the common stream AAoI remains nearly unchanged, while the overall AAoI increases significantly. This indicates that the common stream is insensitive to CSIT accuracy, and the overall performance is highly dependent on it. Therefore, under imperfect CSIT, an appropriate rate splitting between the common and private streams can improve the overall performance.

\begin{figure}[t]
    \centering
    \includegraphics[width=0.85\linewidth]{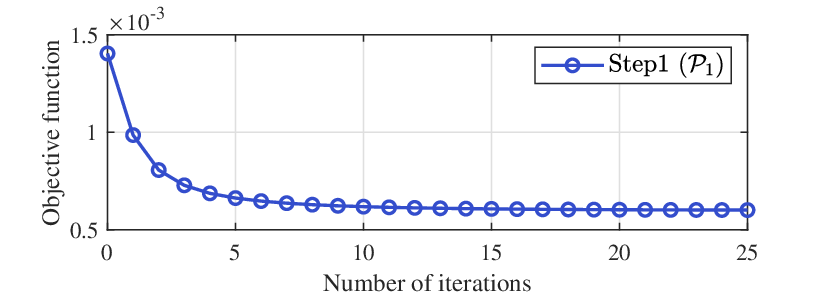}
    \\
    \small (a) Convergence of subproblem $\mathcal{P}_1$.
    \\
    \includegraphics[width=0.85\linewidth]{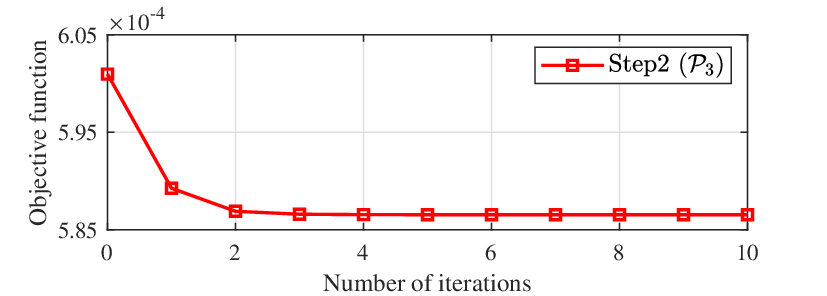} 
    \\
    \small (b) Convergence of subproblem $\mathcal{P}_3$.
    \vspace{-5pt}
    \caption{Convergence of the proposed algorithm.}
    \vspace{-15pt}
    \label{fig6}
\end{figure}
We optimize the AAoI performance of RSMA using the derived closed-form expressions and the proposed algorithm. Fig.~\ref{fig6} illustrates the convergence trend of the two-step SCA iterations in the proposed algorithm. By iteratively solving \hyperref[P2]{$\mathcal{P}_2$} and \hyperref[P4]{$\mathcal{P}_4$}, subproblem \hyperref[P1]{$\mathcal{P}_1$} and \hyperref[P3]{$\mathcal{P}_3$} can be efficiently resolved, respectively. It can be observed that the objective function gradually converges as the number of iterations increases, demonstrating the desirable convergence property of the proposed algorithm. It is noteworthy that subproblem \hyperref[P1]{$\mathcal{P}_1$} requires more iterations to converge, since the introduction of more auxiliary variables leads to slower updates in each iteration.

In the following, we compare the proposed RSMA scheme with NOMA and SDMA to evaluate the AAoI performance. Specifically, the comparison schemes include:
\begin{itemize}
\item RSMA, with power allocation and rate-splitting optimization using the proposed algorithm, where the common and overall performance are termed RSMA (Common) and RSMA (Overall), respectively.
\item SDMA, based on ZF precoding with exhaustive power allocation using a step size of 0.01~\cite{SDMA}.
\item NOMA, based on inter-group spatial division multiplexing, where four users are paired into two groups by proximity, and one layer of SIC is employed within each group~\cite{NOMA}. Exhaustive inter-group and intra-group power allocations are performed using a step size of 0.01.
\end{itemize}

To better demonstrate the performance of RSMA, the principal eigenvector of the CSIT covariance matrix is adopted as the precoder for the common stream.

\begin{figure}[t] 
\centering 
\includegraphics[width=0.85\linewidth]{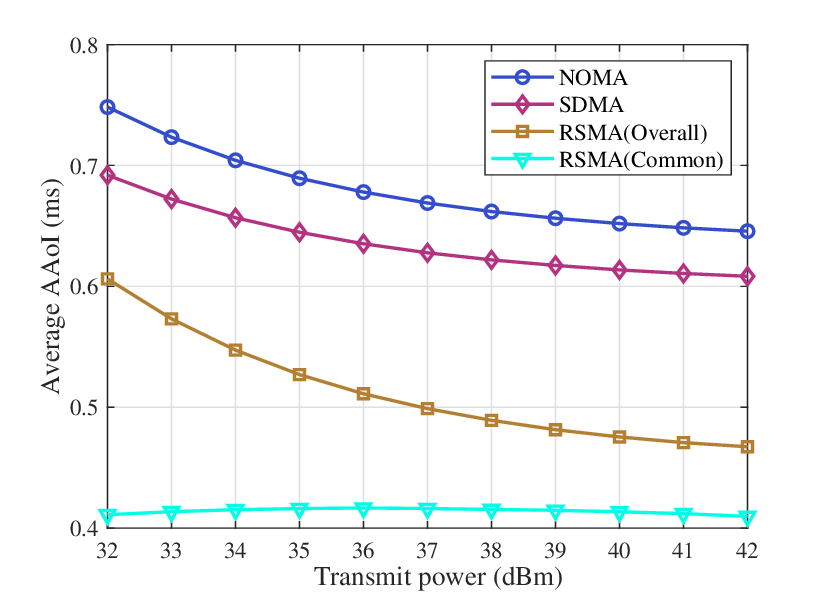}
\vspace{-10pt}
\caption{Average AAoI versus transmit power.}
\vspace{-15pt}
\label{fig7} 
\end{figure}
Fig.~\ref{fig7} shows the average AAoI of vehicles with respect to transmit power. It can be observed that the RSMA scheme consistently outperforms both SDMA and NOMA within the considered transmit power range. This advantage stems from effective resource management through rate splitting, which enables RSMA to better cope with imperfect CSIT. Furthermore, as the transmit power increases, the average AAoI of RSMA (Overall), SDMA, and NOMA all decrease, with the performance advantage of RSMA (Overall) becoming increasingly pronounced, whereas the average AAoI of RSMA (Common) remains virtually unchanged. This observation can be explained as follows. On one hand, higher transmit power provides RSMA greater flexibility in resource allocation, yielding the greatest performance gains for RSMA (Overall). On the other hand, increased transmit power enhances the transmission capacity of the common stream, prompting the algorithm to assign more messages to it, which offsets the potential gains from the increased power and keeps RSMA (Common) nearly unchanged. Nevertheless, this sacrifice of the common stream still ensures that RSMA (Common) achieves optimal performance among all schemes and further improves the performance of RSMA (Overall).

\begin{figure}[t] 
\centering 
\includegraphics[width=0.85\linewidth]{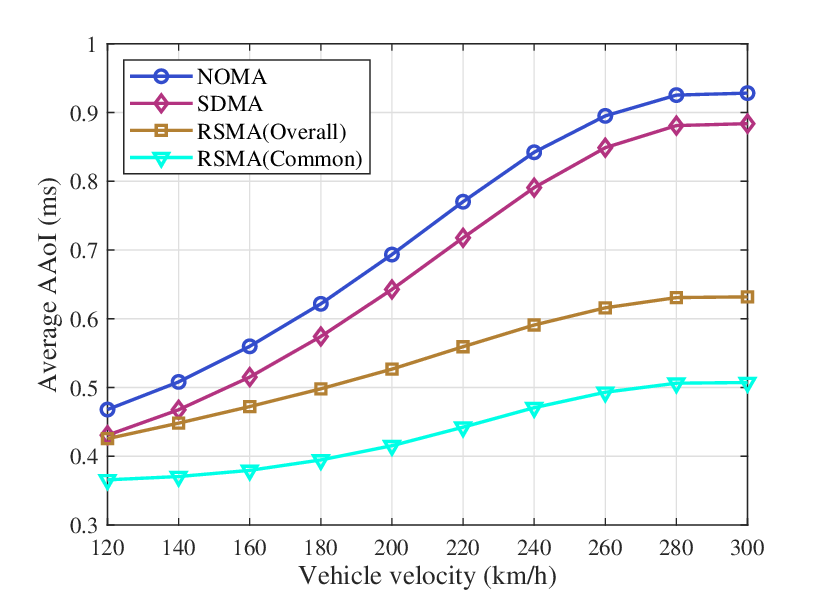}
\vspace{-10pt}
\caption{Average AAoI versus vehicle velocity.}
\vspace{-15pt}
\label{fig8} 
\end{figure}
Fig.~\ref{fig8} depicts the variation of the average AAoI with respect to vehicle velocity. As shown, the RSMA scheme consistently achieves the lowest average AAoI, demonstrating its superior robustness against vehicle mobility. It can be observed that the average AAoI across all schemes increases significantly as velocity increases, since higher velocity reduces CSIT accuracy, leading to more imperfect precoding. The performance deterioration of RSMA is much lower than that of SDMA and NOMA, resulting in an increasingly pronounced performance advantage, indicating that rate splitting effectively mitigates the adverse effects of imperfect CSIT. Additionally, RSMA (Overall) and SDMA exhibit comparable performance at lower velocities. This is because precoding is sufficiently accurate at lower velocities, making the private streams of RSMA highly reliable and reducing reliance on the common stream, thereby causing RSMA to gradually degenerate into SDMA.

\begin{figure}[t] 
\centering 
\includegraphics[width=0.85\linewidth]{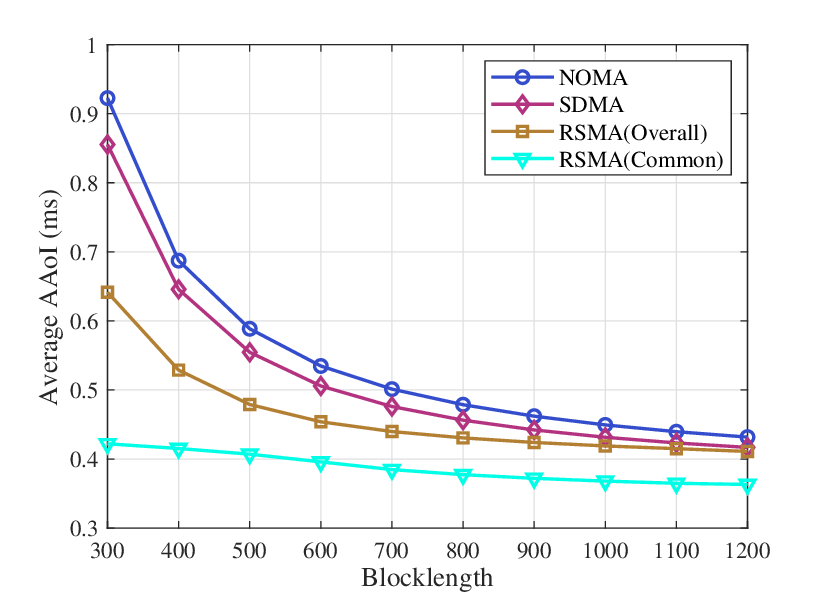}
\vspace{-10pt}
\caption{Average AAoI versus blocklength.}
\vspace{-15pt}
\label{fig9} 
\end{figure}
Fig.~\ref{fig9} displays the performance versus blocklength under different schemes.\footnote{We exclude extremely short blocklengths as imperfect CSIT induces excessive errors in the absence of retransmissions. Given a 10~MHz bandwidth, the latency of the considered blocklength range still meets URLLC requirements.} As the blocklength increases, the average AAoI of all schemes decreases due to reduced decoding errors, with the RSMA scheme consistently achieving the best performance, demonstrating its superior capability to achieve higher information freshness under the same blocklength. Moreover, the performance advantage of RSMA becomes more significant at shorter blocklengths, where the effect of FBL is more critical. As the blocklength approaches infinity, the average AAoI of all schemes gradually converges to the lower bound corresponding to the ideal case with zero BLER.

\begin{figure}[t] 
\centering 
\includegraphics[width=0.85\linewidth]{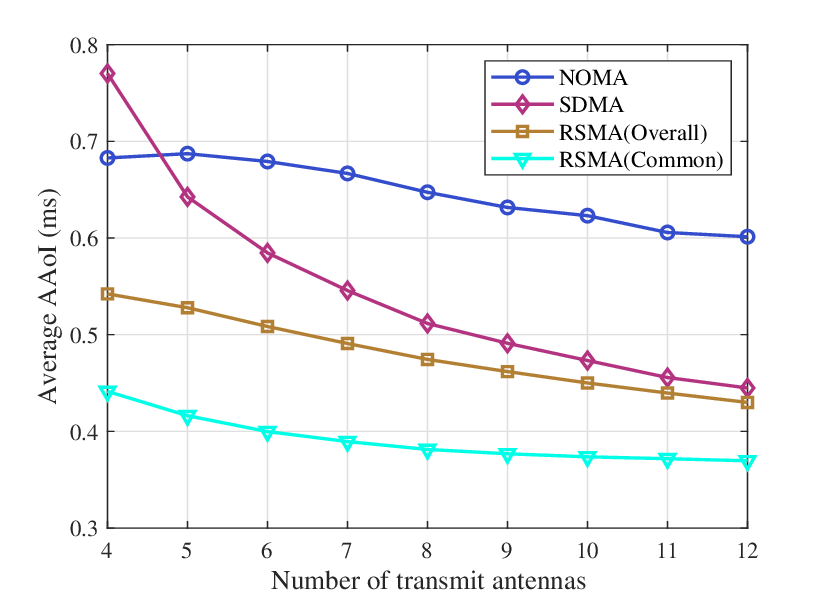}
\vspace{-10pt}
\caption{Average AAoI versus the number of transmit antennas.}
\vspace{-15pt}
\label{fig10} 
\end{figure}
Fig.~\ref{fig10} demonstrates the performance versus the number of transmit antennas at the BS under different schemes. As the number of antennas increases, the spatial degrees of freedom of the system enhance, leading to a reduction in the average AAoI for all schemes, with the RSMA scheme consistently achieving optimal performance, indicating its relatively low dependence on antenna resources. When the number of antennas equals the number of vehicles, the limitations of ZF precoding result in the worst performance for SDMA, whereas RSMA still maintains strong performance due to its superior robustness. Furthermore, an increased number of antennas improves the effectiveness of ZF precoding, thereby significantly enhancing the performance of SDMA, while still being slightly inferior to RSMA (Overall) even at its best.

\begin{figure}[t] 
\centering 
\includegraphics[width=0.85\linewidth]{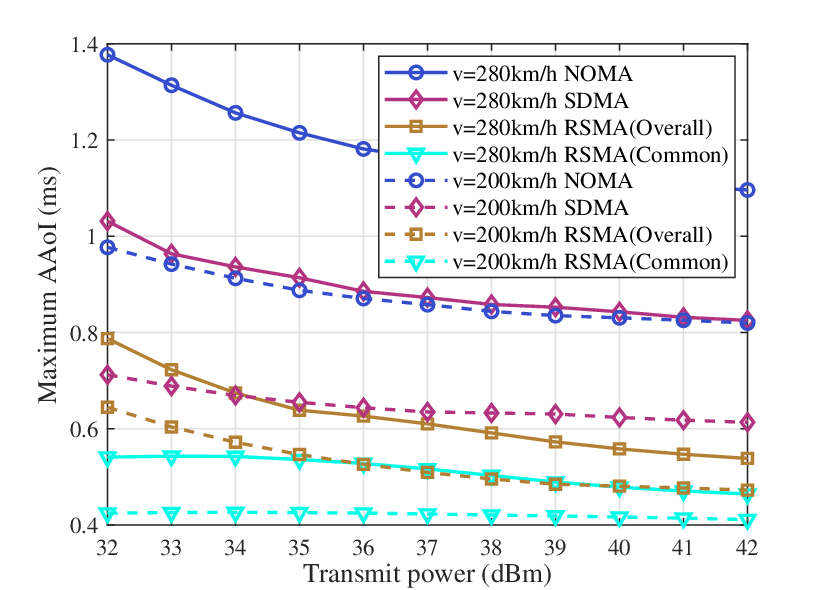}
\vspace{-10pt}
\caption{Maximum AAoI versus transmit power.}
\vspace{-15pt}
\label{fig11} 
\end{figure}
\begin{figure}[t] 
\centering 
\includegraphics[width=0.85\linewidth]{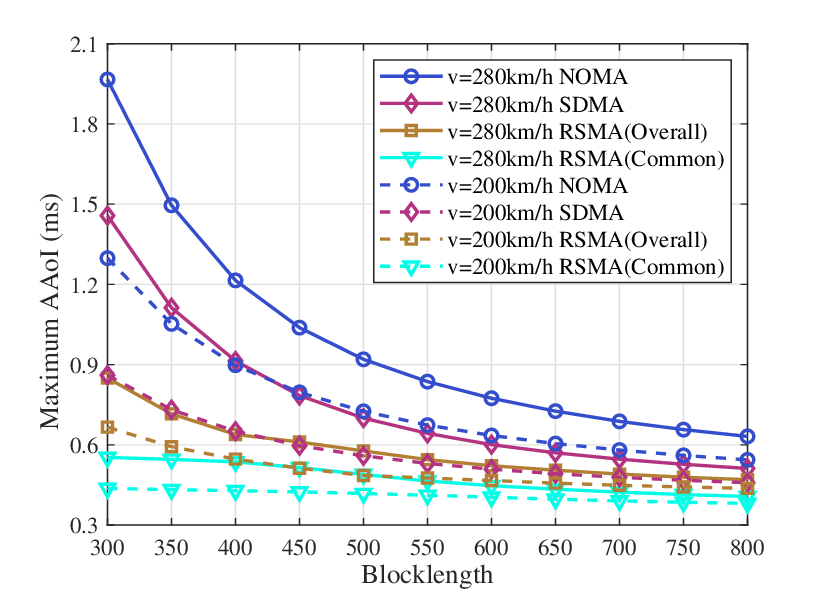}
\vspace{-10pt}
\caption{Maximum AAoI versus blocklength.}
\vspace{-15pt}
\label{fig12} 
\end{figure}
Since the performance of each user is considered equally important, we next evaluate the performance of the worst-case user. Fig.~\ref{fig11} and Fig.~\ref{fig12} illustrate the maximum AAoI among vehicles with respect to transmit power and blocklength, respectively, considering two vehicle velocities. It is evident that the RSMA scheme consistently achieves the lowest maximum AAoI, indicating that the weakest user in RSMA outperforms that in both SDMA and NOMA. Although the proposed algorithm aims to minimize the average AAoI, it does not excessively degrade the performance of individual users, achieving an effective balance between overall and worst-case performance. This advantage arises not only from the appropriate power allocation between the common and private streams but more importantly from the effective rate splitting, which ensures fairness among users. In contrast, SDMA and NOMA perform only power allocation and lack the flexible resource scheduling provided by rate splitting, limiting their capability to mitigate performance differences caused by channel variations. Taken together, Figs.~\ref{fig7}--\ref{fig12} demonstrate the performance advantages of the proposed short-packet RSMA scheme, which consistently achieves the lowest average AAoI, maintains fairness among users, and delivers significantly lower AAoI via the common stream.

\begin{figure}[t]
\centering 
\includegraphics[width=0.85\linewidth]{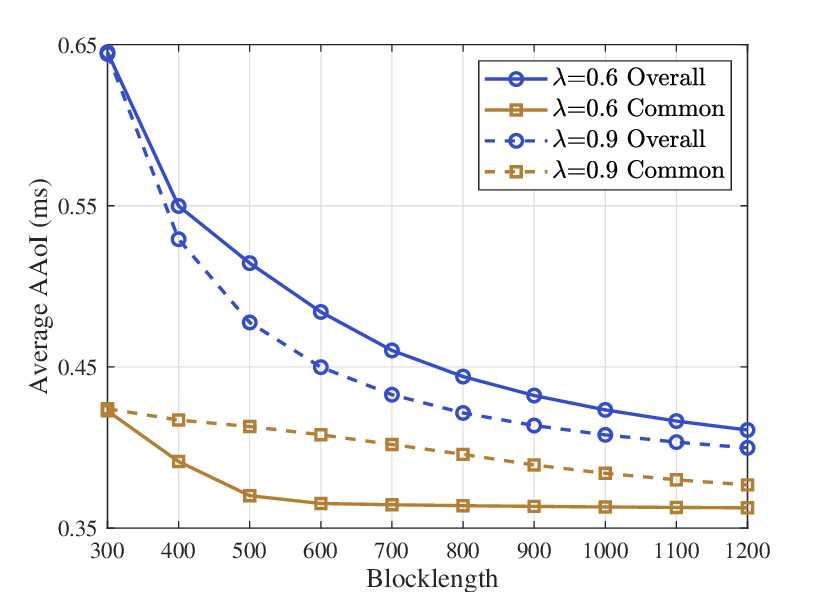}
\vspace{-10pt}
\caption{The trade-off effect of $\lambda$ on the average AAoI.}
\vspace{-15pt}
\label{fig13} 
\end{figure}
Fig.~\ref{fig13} shows the variation of the average AAoI of RSMA versus blocklength under two values of $\lambda$. Particularly, $\lambda$ is a constraint parameter as shown in (\ref{23f}), in the hope of regulating the rate splitting to meet stricter and flexible freshness requirements between common and private streams. It can be observed that a larger $\lambda$ yields lower overall AAoI but higher common stream AAoI. This is because a larger $\lambda$ relaxes the AAoI requirement of the common stream, allowing more messages to be assigned to it, which decreases the transmission burden of the private streams while increasing that of the common stream. If the gain from the reduced private stream burden exceeds the loss from the heavier common stream load, the overall performance improves. As shown in the figure, this rate-splitting adjustment does not improve the overall performance at short blocklengths, so the proposed algorithm does not assign additional messages to the common stream, resulting in similar performance for the two $\lambda$ values. As the blocklength increases, the algorithm benefits from assigning more messages to the common stream, thereby enhancing the overall performance while inevitably reducing the common performance. These results demonstrate that adjusting the QoS constraint through the parameter $\lambda$ enables a flexible trade-off between the common and overall performance, indicating that an appropriate $\lambda$ can enhance the overall performance while preserving the superiority of the common stream.

\section{Conclusion}
\label{section VI}
In this work, we have investigated the AAoI performance of short-packet rate splitting in high-mobility autonomous driving scenarios. Considering system parameters including transmit power, vehicle velocity, blocklength, and the number of transmit antennas, we have derived closed-form AAoI expressions for both the common stream and the overall system. Leveraging these expressions, we have proposed a multi-start two-step SCA algorithm to optimize power allocation and rate splitting, with the QoS constraint regulating the rate-splitting strategy. Simulation results have demonstrated that the proposed RSMA scheme significantly improves the AAoI performance while guaranteeing the performance of the worst-case user and enabling ultra-low AAoI through the common stream, and achieves flexible adjustment between the common and overall performance.

\appendices
\section{Proof of Lemma~\ref{lemma1}}
\label{appendixA}
Based on (\ref{6}), the RV $X$ is approximated as
\begin{equation}
X \approx \rho^2\vert \mathbf{h}_k^H[m-1]\mathbf{p}_k\vert^2+(1-\rho^2)\vert \mathbf{e}_k^H[m]\mathbf{p}_k\vert^2.
\end{equation}
As mentioned earlier, $\vert\mathbf{h}_k^H[m-1]\mathbf{p}_k\vert^2 \!\sim\! \mathrm{Gamma}(N_t\!-\!K+1,1)$ and $\vert \mathbf{e}_k^H[m]\mathbf{p}_k\vert^2 \sim \mathrm{Gamma}(1,1)$. Therefore, we can further derive that $\rho^2\vert\mathbf{h}_k^H[m-1]\mathbf{p}_k\vert^2 \sim \mathrm{Gamma}(N_t-K+1,\rho^2)$ and $(1-\rho^2)\vert \mathbf{e}_k^H[m]\mathbf{p}_k\vert^2 \sim \mathrm{Gamma}(1,1-\rho^2)$. Substituting these two terms back into (\ref{8}) yields (\ref{9}), and the RV $X$ can be approximated by $\widetilde{X} \sim \mathrm{Gamma}(\widetilde{D}_1,\widetilde{\theta}_1)$.

\section{Proof of Lemma~\ref{lemma4}}
\label{appendixB}
We first define a new RV $M_{c,k}$ as
\begin{equation}
M_{c,k}=\frac{G_1}{P_n\xi_k\widetilde{\theta}_{2,k}G_{2,k}+1},
\end{equation}
where $G_1 = \vert\mathbf{h}_k^H[m]\mathbf{p}_c\vert^2 \sim \mathrm{Gamma}(1,1)$ and $G_{2,k} = \sum_{j\in\mathcal{K}}\alpha_j\vert\mathbf{h}_k^H[m]\mathbf{p}_j\vert^2 / \widetilde{\theta}_{2,k} = Y_k / \widetilde{\theta}_{2,k}$. From Lemma~\ref{lemma2}, the RV $Y_k$ can be approximated by $\widetilde{Y}_k \!\sim\! \mathrm{Gamma}(\widetilde{D}_{2,k},\widetilde{\theta}_{2,k})$, where $\widetilde{D}_{2,k}$ and $\widetilde{\theta}_{2,k}$ are given in (\ref{10}). Accordingly, the RV $G_{2,k}$ can be approximated by $\widetilde{G}_{2,k} \sim \mathrm{Gamma}(\widetilde{D}_{2,k},1)$. Under the assumption of independence between the numerator and denominator~\cite{mechanism_independent}, the RV $M_{c,k}$ can be approximated by
\begin{equation}
\widetilde{M}_{c,k}=\frac{G_1}{P_n\xi_k\widetilde{\theta}_{2,k}\widetilde{G}_{2,k}+1}. 
\end{equation}
Then, the CDF of $\widetilde{M}_{c,k}$ is given by
\begin{align}
\label{46}
&F_{\widetilde{M}_{c,k}}(y) = \mathbb{P} \Bigl( G_1\leq y(P_n\xi_k\widetilde{\theta}_{2,k}\widetilde{G}_{2,k}+1) \Bigr) \nonumber \\
&\quad= \int_0^\infty \mathbb{P} \Bigl( G_1\leq y(P_n\xi_k\widetilde{\theta}_{2,k}x+1) \Bigr)f_{\widetilde{G}_{2,k}}(x) \, \mathrm{d}x \nonumber \\
&\quad= \int_0^\infty \Bigl( 1-e^{-y(P_n\xi_k\widetilde{\theta}_{2,k}x+1)} \Bigr)\frac{x^{\widetilde{D}_{2,k}-1}e^{-x}}{\Gamma(\widetilde{D}_{2,k})} \, \mathrm{d}x \nonumber \\
&\quad= 1-\frac{e^{-y}}{\Gamma(\widetilde{D}_{2,k})}\int_0^\infty e^{-(yP_n\xi_k\widetilde{\theta}_{2,k}+1)x}x^{\widetilde{D}_{2,k}-1} \, \mathrm{d}x.
\end{align}
Applying a variable substitution $z=( yP_n\xi_k\widetilde{\theta}_{2,k}+1 )x$, (\ref{46}) is rewritten as
\begin{align}
\label{47}
F_{\widetilde{M}_{c,k}}(y) &= 1-\frac{e^{-y}}{(yP_n\xi_k\widetilde{\theta}_{2,k}+1)^{\widetilde{D}_{2,k}}}\int_0^\infty \frac{z^{\widetilde{D}_{2,k}-1}e^{-z}}{\Gamma(\widetilde{D}_{2,k})} \, \mathrm{d}z \nonumber \\
&= 1-\frac{e^{-y}}{(yP_n\xi_k\widetilde{\theta}_{2,k}+1)^{\widetilde{D}_{2,k}}}\int_0^\infty f_{\widetilde{G}_{2,k}}(z) \, \mathrm{d}z \nonumber \\
&= 1-\frac{e^{-y}}{(yP_n\xi_k\widetilde{\theta}_{2,k}+1)^{\widetilde{D}_{2,k}}}.
\end{align}
Since $\widetilde{\Gamma}_{c,k} = P_n\xi_k\alpha_c\widetilde{M}_{c,k}$, substituting it into (\ref{47}) and we can obtain (\ref{12}).

\section{Proof of Proposition~\ref{proposition1}}
\label{appendixC}
Based on (\ref{15}), the average BLER of the common stream can be approximated as
\begin{align}
&\varepsilon_{c,k} \approx \mathbb{E}\left[ L(\Gamma_{c,k}) \right] = \int_0^\infty\!\! L(x)f_{\Gamma_{c,k}}(x) \, \mathrm{d}x \nonumber\\
&= \int_0^\mu\!\! f_{\Gamma_{c,k}}(x) \, \mathrm{d}x + \int_\mu^\nu\!\! \left[ 0.5-\delta(x-\beta) \right]f_{\Gamma_{c,k}}(x) \, \mathrm{d}x \nonumber\\
&= (0.5\!-\!\delta\beta)F_{\Gamma_{c,k}}(\mu) + (0.5\!+\!\delta\beta)F_{\Gamma_{c,k}}(\nu) - \delta \!\!\int_\mu^\nu\!\!\! x \, \mathrm{d}F_{\Gamma_{c,k}}(x) \nonumber\\
&= (0.5+\delta\mu-\delta\beta)F_{\Gamma_{c,k}}(\mu) + (0.5+\delta\beta-\delta\nu)F_{\Gamma_{c,k}}(\nu) \nonumber\\
& \quad + \delta \!\!\int_\mu^\nu\!\! F_{\Gamma_{c,k}}(x) \, \mathrm{d}x.
\end{align}
As mentioned earlier, $\mu=\beta-\frac{1}{2\delta}$, and $\nu=\beta+\frac{1}{2\delta}$. It follows that $0.5+\delta\mu-\delta\beta=0.5+\delta\beta-\delta\nu=0$, therefore we obtain $\varepsilon_{c,k} \approx \delta\int_\mu^\nu F_{\Gamma_{c,k}}(x) \, \mathrm{d}x$, which can be further approximated as $\varepsilon_{c,k} \approx \delta(\nu-\mu)F_{\Gamma_{c,k}}(\frac{\mu+\nu}{2}) = F_{\Gamma_{c,k}}(2^{\frac{m_c}{n_c}}-1) \nonumber$ by using the midpoint rule. Substituting the approximate CDF of $\Gamma_{c,k}$ from (\ref{12}) into the above expression yields (\ref{16}).

\section{Proof of Lemma~\ref{lemma6}}
\label{appendixD}
Based on (\ref{19}), the AAoI can be further derived as
\begin{align}
\label{49}
\bar{\Delta} &= \lim_{\tau\to\infty}\frac{1}{\tau}\Bigl( Q_0 + \sum_{j=1}^{N(\tau)}Q_j + \Delta Q \Bigr) \nonumber\\
&= \lim_{\tau\to\infty}\frac{T^2+2\Delta Q}{2\tau}+\lim_{\tau\to\infty}\frac{\sum_{j=1}^{N(\tau)}Q_j}{\tau} \nonumber\\
&= \lim_{\tau\to\infty}\frac{N(\tau)}{\tau} \frac{\sum_{j=1}^{N(\tau)}Q_j}{N(\tau)}.
\end{align}
We define $W_j$ as the number of slots between the $(j-1)$-th and $j$-th successful receptions. Thus, the average time elapsed for each new successful reception is $T \mathbb{E}[W]$. According to the strong law of large numbers, as $\tau\to\infty$, the number of successful receptions per unit time converges to
\begin{equation}
\label{50}
\lim_{\tau\to\infty} \frac{N(\tau)}{\tau}=\frac{1}{T\mathbb{E}[W]}.
\end{equation}
Similarly, we can obtain
\begin{equation}
\label{51}
\lim_{\tau\to\infty} \frac{\sum_{j=1}^{N(\tau)}Q_j}{N(\tau)}=\mathbb{E}[Q],
\end{equation}
where $Q_j = T^2(W_j^2/2+W_j)$. By substituting (\ref{50}) and (\ref{51}) into (\ref{49}), we obtain
\begin{equation}
\label{52}
\bar{\Delta} = \frac{\mathbb{E}[Q]}{T\mathbb{E}[W]} = T \Bigl( \frac{\mathbb{E}[W^2]}{2\mathbb{E}[W]}+1 \Bigr).
\end{equation}
The reception failure probability is $\epsilon$, then the probability distribution of $W$ is given by
\begin{equation}
\mathbb{P}(W=n) = \epsilon^{n-1}(1-\epsilon), \; n\in\mathbb{Z}^+.
\end{equation}
Since $W$ follows a geometric distribution, we can obtain that $\mathbb{E}[W] = 1/(1-\epsilon)$ and $\mathbb{E}[W^2] = (1+\epsilon)/(1-\epsilon)^2$. Substituting the above into (\ref{52}) yields (\ref{20}).

\bibliographystyle{IEEEtran}
\bibliography{References}

\end{document}